\theoremstyle{plain}
\newtheorem{theorem}{Theorem}
\newtheorem*{theorem*}{Theorem}
\newtheorem*{corollary*}{Corollary}
\newtheorem{lemma}[theorem]{Lemma}
\newtheorem*{lemma*}{Lemma}
\newtheorem{proposition}[theorem]{Proposition}
\newtheorem*{proposition*}{Proposition}
\theoremstyle{definition}
\newtheorem{definition}{Definition}
\newtheorem*{definition*}{Definition}
\newtheorem*{example*}{Example}
\theoremstyle{remark}
\newtheorem{remark}[theorem]{Remark}
\newtheorem*{remark*}{Remark}
\newtheorem*{notation*}{Notation}
\newcommand{\NN}{{\mathbb N}}
\newcommand{\RR}{{\mathbb R}}
\newcommand{\OPT}{\operatorname{OPT}}
\definecolor{webgreen}{rgb}{0,.5,0}
\definecolor{webbrown}{rgb}{.6,0,0}
\providecommand{\half}{\ensuremath{\nicefrac{1}{2}}\xspace}
\providecommand{\SetCard}[1]{\ensuremath{\left| #1 \right|}\xspace}
\providecommand{\SET}[1]{\ensuremath{\left\{ #1 \right\}}\xspace}
\providecommand{\Abs}[1]{\ensuremath{\left| #1 \right|}\xspace}
\providecommand{\Set}[2]{\ensuremath{\SET{#1 \mid #2}}\xspace}
\providecommand{\Kth}[1]{\ensuremath{{#1}^{\rm th}}}
\providecommand{\Ceiling}[1]{\ensuremath{\left\lceil {#1} \right\rceil}\xspace}
\providecommand{\Floor}[1]{\ensuremath{\left\lfloor {#1} \right\rfloor}\xspace}
\providecommand{\PROB}{\ensuremath{{\rm Prob}}\xspace}
\providecommand{\Prob}[2][]{\ensuremath{%
\ifthenelse{\equal{#1}{}}{\PROB[#2]}{\PROB_{#1}[#2]}}\xspace}
\providecommand{\Expect}[2][]{\ensuremath{%
\ifthenelse{\equal{#1}{}}{\mathbb{E}}{\mathbb{E}_{#1}}%
\left[#2\right]}\xspace}
\newcommand{\BT}{\ensuremath{T}\xspace}
\newcommand{\BV}{\ensuremath{V}\xspace}
\newcommand{\BE}{\ensuremath{E}\xspace}
\newcommand{\ST}{\ensuremath{\mathcal{T}}\xspace}
\newcommand{\STP}{\ensuremath{\mathcal{T}'}\xspace}
\newcommand{\NumBV}{\ensuremath{n}\xspace}
\newcommand{\QUERY}{\ensuremath{q}\xspace}
\newcommand{\Query}[1]{\ensuremath{\QUERY_{#1}}\xspace}
\newcommand{\SEQ}[1][]{\ensuremath{\ifthenelse{\equal{#1}{}}{Q}{Q^{#1}}}\xspace}
\newcommand{\SEQH}[1][]{\ensuremath{\ifthenelse{\equal{#1}{}}{\hat{Q}}{\hat{Q}^{#1}}}\xspace}
\newcommand{\Seq}[2]{\ensuremath{\SEQ[](#1,#2)}\xspace}
\newcommand{\GENCOST}[1][]{\ensuremath{\ifthenelse{\equal{#1}{}}{f}{f_{#1}}}\xspace}
\newcommand{\GenCost}[3][]{\ensuremath{\GENCOST[#1]\left(#2,#3\right)}\xspace}
\newcommand{\COST}[1][]{\ensuremath{\ifthenelse{\equal{#1}{}}{h}{h^{#1}}}\xspace}
\newcommand{\Cost}[2][]{\ensuremath{\COST[#1]\left(#2\right)}\xspace}
\newcommand{\SeqCost}[1]{\ensuremath{c\left(#1\right)}\xspace}
\newcommand{\AlgCost}[1]{\ensuremath{C\left(#1\right)}\xspace}
\newcommand{\TreeCost}[2]{\ensuremath{C_{#1}\left(#2\right)}\xspace}
\newcommand{\DB}{\ensuremath{p}\xspace} 
\newcommand{\CB}{\ensuremath{s}\xspace} 
\newcommand{\PolCo}[2][]{\ensuremath{\ifthenelse{\equal{#1}{}}{\alpha_{#2}}{\alpha^{#1}_{#2}}}\xspace} 
\newcommand{\CumCo}[2][]{\ensuremath{\ifthenelse{\equal{#1}{}}{\beta_{#2}}{\beta^{#1}_{#2}}}\xspace} 
\newcommand{\CumCoT}[2][]{\ensuremath{\ifthenelse{\equal{#1}{}}{\gamma_{#2}}{\gamma^{#1}_{#2}}}\xspace}
\newcommand{\EnCo}[2]{\ensuremath{\beta_{#1}^{#2}}\xspace}
\newcommand{\DIST}[1][]{\ensuremath{\ifthenelse{\equal{#1}{}}{d}{d_{#1}}}\xspace}
\newcommand{\Dist}[3][]{\ensuremath{\DIST[#1](#2,#3)}\xspace}
\newcommand{\Boundary}[2][]{\ensuremath{\ifthenelse{\equal{#1}{}}{\partial\left(#2\right)}{\partial_{#1}\left(#2\right)}}\xspace}
\newcommand{\Path}[2][]{\ensuremath{\ifthenelse{\equal{#1}{}}{P\left(#2\right)}{P_{#1}\left(#2\right)}}\xspace}
\newcommand{\ch}[2][]{\ensuremath{\ifthenelse{\equal{#1}{}}{\text{ch}\left(#2\right)}{\text{ch}_{#1}\left(#2\right)}}\xspace}
\newcommand{\ALG}{\ensuremath{\mathcal{A}}\xspace}
\newcommand{\BS}{\ensuremath{\mathcal{B}}\xspace}
\newcommand{\TV}{\ensuremath{t}\xspace} 
\newcommand{\LAB}{\ensuremath{\lambda}\xspace} 
\newcommand{\Lab}[1]{\ensuremath{\LAB\left(#1\right)}\xspace}
\newcommand{\Feas}[2][]{\ensuremath{\ifthenelse{\equal{#1}{}}{\phi\left(#2\right)}{\phi_{#1}\left(#2\right)}}\xspace}
\newcommand{\IFV}[1][]{\ensuremath{\ifthenelse{\equal{#1}{}}{\bar{v}}{\bar{v}_{#1}}}\xspace}
\newcommand{\IFS}[1][]{\ensuremath{\ifthenelse{\equal{#1}{}}{Q}{Q_{#1}}}\xspace}
\begin{document}

\begin{titlepage}
\title{Binary Search with Distance-Dependent Costs}

\author{Calvin Leng}
\address{Calvin Leng, University of Southern California, Los Angeles, CA 90089, US}
\email{cleng@usc.edu}

\author{David Kempe}
\address{David Kempe, University of Southern California, Los Angeles, CA 90089, US}
\email{david.m.kempe@gmail.com}

\maketitle

\begin{abstract}
We introduce a search problem generalizing the typical setting of Binary Search on the line.
Similar to the setting for Binary Search, a target is chosen adversarially on the line, and in response to a query, the algorithm learns whether the query was correct, too high, or too low.
Different from the Binary Search setting (which is the special case of a constant function), the cost of a query is a monotone non-decreasing function of the distance between the query and the correct answer; different functions can be used for queries that are too high vs.~those that are too low.
The algorithm's goal is to identify an adversarially chosen target with minimum total cost.
Note that the algorithm does not even know the cost it incurred until the end, when the target is revealed.
This abstraction captures many natural settings in which a principal experiments by setting a quantity (such as an item price, bandwidth, tax rate, medicine dosage, etc.) where the cost or regret increases the further the chosen setting is from the optimal one.

First, we show that for arbitrary symmetric cost functions (i.e., the case when overshooting vs.~undershooting by the same amount leads to the same cost), the standard Binary Search algorithm is a 4-approximation.
  
We then show that when the cost functions are bounded-degree polynomials of the distance (including the asymmetric case), the problem can be solved optimally using Dynamic Programming; this relies on a careful encoding of the combined cost of past queries (which, recall, will only be revealed in the future).
We then generalize the setting to finding a node on a tree; here, the
response to a query is the direction on the tree in which the target is located, and the cost is increasing in the distance on the tree from the query to the target.
Using the $k$-cut search tree framework of Berendsohn and Kozma and the ideas we developed for the case of the line, we give a PTAS when the cost function is a bounded-degree polynomial.
\end{abstract}
\end{titlepage}

\section{Introduction}\label{s:intro}
Binary Search is one of the most fundamental algorithms, with applications extending far beyond computer science. Even children usually figure out Binary Search at an early age via a ``Number Guessing'' game in which the feedback in response to a guess is only whether the guess was correct, too high, or too low. Without additional distributional assumptions on the correct answer, Binary Search minimizes the worst-case number of guesses until the correct number is guessed.

Applying this insight to real-world situations, Binary Search will lead to the fastest convergence when an unknown quantity must be guessed in a setting in which only feedback of the form ``too high''/``too low'' is provided.
Natural applications abound, such as determining the right medicine dosage, price to charge for an item, maximum network bandwidth, setting for a thermostat, tax rate, and many others.
Inspecting these examples, we see that not only does it matter how long it takes until the right value is found, but there is a cost to guesses that are far from the correct value. A significantly wrong medicine dosage can have more deleterious health effects, an item price that is much too low loses more potential revenue, a room that is much too cold/hot will be less comfortable than one that is only slightly so, etc.

This suggests a natural refinement of the standard Binary Search model, in which non-uniform costs are associated with queries.\footnote{The standard model corresponds to a cost of 1 for each incorrect query.} While a substantial literature --- discussed in \cref{s:related-work} --- has considered non-uniform query costs, these costs have mostly been associated with the query itself, and were independent of the correct answer. In the thermostat example, this would correspond to different costs of setting the thermostat to 58F vs.~68F, irrespective of what the ideal temperature is. While there are many (different) applications for such a model of query-dependent costs, a model in which the costs depend on the distance between the query and the correct answer is --- as discussed above --- at least equally natural.

Two works have studied online optimization problems in which the incurred cost depends on both the query and the ideal setting. Karp et al.~\cite{karp2000congestion} study congestion control in networks. Here, the goal is to determine the ideal bandwidth at which to send data. Sending at higher bandwidth causes a cost due to packet loss, while sending at a lower bandwidth results in opportunity cost equal to the difference between the ideal and chosen bandwidth. Kleinberg and Leighton \cite[Section 2]{kleinberg2003oppa} study single-item pricing for repeated interactions with the same single buyer. Here, setting the price too high causes the buyer to not buy at all, while setting the price too low causes opportunity cost equal to the difference between the buyer's valuation and the chosen price.

One difficulty in this setting is the following: the algorithm does not know how much cost it incurred until it terminates and observes the correct answer. This distinguishes the setup from that of many optimization problems.

\subsection{The Basic Setup}
We consider the following basic setup, described in detail in \cref{s:prelims}. The algorithm is to identify an adversarially chosen number $t$ from $\SET{1, \ldots, \NumBV}$. To do so, it repeatedly queries numbers \Query{i} from $\SET{1, \ldots, \NumBV}$. In return, it learns whether $\Query{i} = \TV$ (in which case the process terminates), $\Query{i} < \TV$, or $\Query{i} > \TV$.
When making the query \Query{i} in round $i$, the algorithm incurs a cost of \GenCost{\Query{i}}{\TV}.
Here, \GENCOST is a known function given to the algorithm beforehand; however, the algorithm cannot evaluate the incurred cost until the end, when it learns \TV.
Since \TV is chosen adversarially, the algorithm's goal is to minimize the maximum total cost $\max_{\TV} \sum_i \GenCost{\Query{i}}{\TV}$ over all choices of \TV.

\subsection{Results for the Line}
First, in \cref{s:binary}, we show that on the line, Binary Search is a constant-factor approximation.

\begin{theorem} \label{t:binary-search}
  In the setting of searching on the line, assume that the cost function is monotone and symmetric\footnote{The assumption of symmetry is essential for the guarantee to hold. For example, if $\QUERY < \TV$ resulted in cost (essentially) 0, while $\QUERY > \TV$ always incurred high cost, then Linear Search would perform much better than Binary Search.} (i.e., $\GenCost{\QUERY}{\TV} = \Cost{\Abs{\QUERY-\TV}}$ depends only on the distance between \TV and \QUERY, but not on whether $\TV < \QUERY$ or $\TV > \QUERY$, and \COST is monotone non-decreasing).
  Then, the standard Binary Search algorithm, which always queries a midpoint of the remaining feasible set, is a 4-approximation.
\end{theorem}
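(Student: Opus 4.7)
My plan is to sandwich both $C_{BS}(t)$ and $\OPT$ between two versions of the geometric sum $\sum_{i \ge 1} h(\lfloor n/2^i \rfloor)$. For the upper bound on BS, the feasible interval shrinks by a factor of at least two each round and BS queries its midpoint, so the step-$i$ distance $|q_i - t|$ is at most $\lfloor n/2^i \rfloor$, giving $C_{BS}(t) \le \sum_{i \ge 1} a_i$ with $a_i := h(\lfloor n/2^i \rfloor)$. Monotonicity of $h$ makes $(a_i)$ non-increasing, and pairing consecutive terms yields $\sum_{i \ge 1} a_i \le a_1 + 2\sum_{k \ge 1} a_{2k}$. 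This reduces the theorem to proving the two lower bounds $\OPT \ge a_1$ and $\OPT \ge \sum_{k \ge 1} a_{2k}$ separately, because adding them gives $2\,\OPT \ge a_1 + \sum_k a_{2k}$ and hence $4\,\OPT \ge 2a_1 + 2\sum_k a_{2k} \ge a_1 + 2\sum_k a_{2k} \ge C_{BS}(t)$ for every target $t$.

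The first bound $\OPT \ge a_1$ is a standard adversarial argument: for any algorithm, the opening query $q_1$ sits at distance at least $\lfloor n/2 \rfloor$ from one of the endpoints $1$ or $n$, so the cost incurred on that endpoint alone already reaches $h(\lfloor n/2 \rfloor) = a_1$; the symmetry of $h$ is what lets me treat the two endpoints interchangeably. For the second bound I would unroll the recursion $\OPT(n) \ge h(\lfloor n/4 \rfloor) + \OPT(\lfloor n/4 \rfloor)$. To prove this recursion, take any algorithm $\ALG$ with opening query $q_1$ and (by symmetry) assume the right side $R = [q_1+1, n]$ has $n_R := |R| \ge \lceil (n-1)/2 \rceil$. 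Let $R' := \{t \in R : t - q_1 \ge \lceil n_R/2 \rceil\}$ be the ``far half''; then $|R'| \ge \lfloor n/4 \rfloor$, and every $t \in R'$ contributes at least $h(\lceil n_R/2 \rceil) \ge h(\lfloor n/4 \rfloor)$ from the opening query alone. Simultaneously, the subtree $\ALG_R$ of $\ALG$ rooted at the right child of $q_1$, when restricted to targets in $R'$, is itself a valid algorithm for the $|R'|$-element search problem, so by translation invariance its worst-case cost on $R'$ is at least $\OPT(|R'|) \ge \OPT(\lfloor n/4 \rfloor)$. Summing the two contributions yields $\max_{t \in R'} C_\ALG(t) \ge h(\lfloor n/4 \rfloor) + \OPT(\lfloor n/4 \rfloor)$, as required.

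The main obstacle is this recursion step: it requires identifying a sub-interval on which the opening query's cost \emph{and} the sub-problem's OPT contribute simultaneously, and arguing cleanly that the restriction of any algorithm to a contiguous sub-interval inherits that sub-problem's OPT as a lower bound on its worst-case cost --- even when the restricted algorithm may make ``wasteful'' queries outside the sub-interval. Keeping the floor/ceiling arithmetic tight enough that the final constant is exactly $4$ rather than $4 + o(1)$ is a secondary bookkeeping concern that I would handle uniformly by absorbing rounding errors into the monotonicity inequalities between $a_{2k}$ and $h(\lfloor n/4^k \rfloor)$.
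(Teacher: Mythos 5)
Your proposal is correct, and it takes a genuinely different route from the paper's proof for the crucial lower bound. Both arguments share the upper bound $C_{\BS}(t) \le \sum_{i\ge 1} h(\lfloor n/2^i\rfloor)$, but where the paper establishes its key inequality (its Eq.~(3), $2\,\OPT \ge \sum_{i\ge 4}h(\lfloor n/2^i\rfloor)$) via a global charging argument --- fixing the ``point-to-the-larger-side'' adversary, tracking the endpoints $L_j,R_j$ of the feasible interval through the run, and extracting one distinct previously-queried point per dyadic distance scale --- you instead exploit the translation-invariance of the cost function to get the modular recursion $\OPT(n) \ge h(\lfloor n/4\rfloor) + \OPT(\lfloor n/4\rfloor)$, which unrolls directly to $\OPT \ge \sum_{k\ge 1} h(\lfloor n/4^k\rfloor)$. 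Your approach buys conceptual simplicity and modularity (no explicit bookkeeping of query positions), and it actually proves a stronger constant than you claim: from $C_{\BS} \le a_1 + 2\sum_{k\ge 1}a_{2k}$ together with $\OPT \ge a_1$ and $\OPT \ge \sum_{k\ge 1}a_{2k}$, you should conclude $C_{\BS}\le 3\,\OPT$, a $3$-approximation, not just $4$.

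The obstacle you flag --- that the subtree of $\ALG$ restricted to targets in $R'$ is not literally an algorithm on the $|R'|$-element problem, since it may query points in $R\setminus R'$ --- is real but fillable. Define the induced algorithm $\ALG'$ on $R'$ by simulating $\ALG_R$ and \emph{skipping} any query $q \notin R'$: since $R'$ is the far portion of $R$, any such $q$ lies strictly to one side of all of $R'$, so the response is forced by $t\in R'$ and can be supplied internally without a query. Then $\ALG'$ makes only queries in $R'$, terminates by querying $t$ (because $\ALG_R$ eventually queries $t\in R'$), and its query set on any target $t\in R'$ is a subset of that of $\ALG_R$, so $C_{\ALG'}(t)\le C_{\ALG_R}(t)$ by non-negativity of costs. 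Since $\ALG'$ is a deterministic algorithm on a contiguous $|R'|$-element interval and the cost is translation invariant, $\max_{t\in R'}C_{\ALG'}(t)\ge \OPT(|R'|)\ge \OPT(\lfloor n/4\rfloor)$. Combined with the fact that $h(|t-q_1|)\ge h(\lfloor n/4\rfloor)$ holds for \emph{every} $t\in R'$ (so it can be added inside the max), this completes the recursion step; the same skipping argument also justifies the implicit monotonicity of $\OPT$ in $n$.
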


We emphasize that this result holds for arbitrary monotone functions of distance, a larger class of functions than for our other results.
Also in \cref{s:binary}, we show that there a constant $c > 1$ such that Binary Search does not provide a $c$-approximation, even for linear cost functions. Determining a tight approximation guarantee of Binary Search is an interesting direction for future work.

Our second main result for the line, proved in \cref{s:paths}, is that when the cost function is a possibly asymmetric low-degree polynomial, an optimal search strategy can be found in polynomial time.

\begin{theorem} \label{t:path}
  In the setting of searching on the line, assume that the cost function is monotone, possibly asymmetric, and is a constant-degree polynomial.
  In other words, $\GenCost{\QUERY}{\TV} = \Cost{\QUERY-\TV}$ where $\Cost{x} = \Cost[-]{x}$ when $x < 0$ and $\Cost{x} = \Cost[+]{x}$ when $x > 0$, and $\COST[-], \COST[+]$ are non-decreasing polynomials of degree \DB.
  Then, there is an algorithm with running time $O(\NumBV^{\DB^2+3\DB+3})$ which finds a search strategy minimizing the maximum total incurred cost.
\end{theorem}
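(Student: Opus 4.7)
The plan is a dynamic program whose subproblems are indexed by (i) the current feasible interval $[a,b]$ containing the target and (ii) a compact summary of the contribution that previously-made queries will ultimately make to the total cost. The crucial structural observation is that, although the algorithm cannot evaluate any past cost until the target is revealed, for every candidate target $t \in [a,b]$ the accumulated past cost
\[
\Phi(t) \;=\; \sum_{q^+ > b} \Cost[+]{q^+ - t} \;+\; \sum_{q^- < a} \Cost[-]{q^- - t}
\]
is a polynomial in $t$ of degree at most $\DB$, because each of $\Cost[+]{\cdot}$ and $\Cost[-]{\cdot}$ is. So even though it is invisible to the algorithm until termination, the accumulated past cost lies in a finite-dimensional space and can be tracked explicitly in the state.

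Let $V(a,b,\Phi)$ denote the minimax remaining total cost assuming $t \in [a,b]$ and that past queries have produced accumulated polynomial $\Phi$. The natural recursion tries each candidate next query $q \in [a,b]$ and takes the minimax over the three possible responses:
\[
V(a,b,\Phi) \;=\; \min_{q \in [a,b]} \max\Bigl\{\, \Phi(q) + \Cost{0},\ V(a,q-1,\Phi + \Cost[+]{q-\cdot}),\ V(q+1,b,\Phi + \Cost[-]{q-\cdot})\,\Bigr\},
\]
where an empty sub-interval is dropped from the maximum, the base case is $V(a,a,\Phi) = \Phi(a) + \Cost{0}$, and the final answer is $V(1,\NumBV,0)$.

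To keep the state space polynomial, I will represent $\Phi$ by the two tuples of integer power-sum statistics $M_\ell^{\pm} \;=\; \sum_{q^\pm}\,(q^\pm)^\ell$ for $\ell = 0, 1, \ldots, \DB$, taken over the past too-high and too-low queries respectively. Expanding $\Cost[\pm]{q-t}$ by the binomial theorem shows that every coefficient of $\Phi$, viewed as a polynomial in $t$, is a fixed linear combination of these $2(\DB+1)$ moments; so $\Phi(q)$ evaluates in $O(\DB^2)$ time, and the transition after a new query simply increments one side's $M_\ell^{\pm}$ by $q^\ell$ for each $\ell$. Because $M_\ell^{\pm}$ is a nonnegative integer of magnitude at most $O(\NumBV^{\ell+1})$, the moment tuples range over at most $\prod_{\ell=0}^{\DB} \NumBV^{2(\ell+1)} = \NumBV^{(\DB+1)(\DB+2)}$ values; combined with the $O(\NumBV)$ choices of next query and a careful interval/state accounting, this yields the claimed $O(\NumBV^{\DB^2+3\DB+3})$ running time.

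The main technical obstacle is the encoding itself: verifying that the $2(\DB+1)$ power sums really do form a sufficient statistic for all past queries (enough both to evaluate $\Phi$ at every candidate $q$ and to propagate correctly through the recursion), and then counting reachable states tightly enough to hit the announced exponent. Once the encoding is in place, correctness of the DP and the per-state $O(\NumBV)$ work follow immediately from the recursion above.
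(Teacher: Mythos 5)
Your proposal is correct and takes essentially the same approach as the paper: a dynamic program over feasible intervals whose state is augmented with the power-sum sketches $M_\ell^{\pm}=\sum_{q^\pm}(q^\pm)^\ell$ for $\ell=0,\ldots,\DB$ (the paper's $\CumCo[\SEQ]{j}$), with the binomial-theorem observation that these moments determine the accumulated past-cost polynomial in $t$, and the same recurrence over the three adversary responses. The only thing you leave slightly vague is the final exponent accounting (your ``careful interval/state accounting''), but the raw count you set up matches the paper's.
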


The key insight for proving \cref{t:path} is that all the relevant information about the history of past queries can be encoded sufficiently succinctly to enable a dynamic program. This encoding specifically draws on the fact that the cost function is a constant-degree polynomial, and can be considered akin to sketches, although in a different context from the streaming one that typically draws on sketches.

The fact that the cost depends on \TV \emph{only} through its distance from \QUERY is somewhat restrictive; in particular, it means that Theorem~\ref{t:path} does not apply to the regret in a repeated posted-price setting \cite{kleinberg2003oppa} (where the regret when $\QUERY > \TV$ is the constant \TV) or to the congestion control setting \cite{karp2000congestion} (where the opportunity cost when $\QUERY > \TV$ is also sometimes assumed to be the constant \TV).
We therefore provide the following generalization to bounded-degree polynomials of both \QUERY and \TV, the proof of which can be found in \cref{s:position-dependent}.
Here, to obtain a polynomial runtime, we need to assume that the \emph{coefficients} of the cost functions are also bounded by a polynomial in \NumBV.

\begin{theorem} \label{t:multivar_path}
In the setting of searching on the line, assume that the cost function \GENCOST is of the form $\GenCost{\QUERY}{\TV} = \Cost{\QUERY, \TV}$ for a\footnote{Again, we allow for different functions \COST[-], \COST[+] depending on whether $\QUERY < \TV$ or $\QUERY > \TV$.} bivariate polynomial \COST of maximum degree \DB whose coefficients are bounded in magnitude by $O(\NumBV^{\CB})$ for some constant \CB.
  Then, there is an algorithm with running time $O(n^{3 + (p+1) \cdot (p + 2s + 2)/2})$ which finds a search strategy minimizing the total incurred cost against an adversary.
\end{theorem}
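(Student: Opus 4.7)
The plan is to extend the sketching-based dynamic program behind \cref{t:path} to a richer state that also captures dependence on the target \TV beyond the distance $|\QUERY-\TV|$. The starting observation is that after queries $\Query{1},\ldots,\Query{i-1}$ have all been wrong, the cost paid so far, viewed as a function of the (still unknown) target, is
\[
  \sum_{k=1}^{i-1}\Cost{\Query{k},\TV}
  \;=\;\sum_{j=0}^{p}A_j\,\TV^j,
  \qquad
  A_j\;:=\;\sum_{k=1}^{i-1}\;\sum_{i'\le p-j}c_{i'j}\,\Query{k}^{\,i'},
\]
where $c_{ij}$ are the coefficients of the bivariate polynomial $h$. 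Hence the $p+1$ numbers $A_0,\ldots,A_p$ form a sufficient summary of the past: every future cost evaluation depends on the past only through $\vec A$.

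Accordingly I would define the DP state to be $([a,b],A_0,\ldots,A_p)$, meaning the target is known to lie in $[a,b]$ and the past-cost polynomial has coefficients $\vec A$. Let $V([a,b],\vec A)$ denote the optimal worst-case total (past plus future) cost from this state, let $P_{\vec A}(\TV)=\sum_j A_j\TV^j$, and let $\vec h_q$ denote the coefficient vector in $\TV$ of $\Cost{q,\TV}$. The recurrence is
\[
  V([a,b],\vec A)=\min_{q\in[a,b]}\max\!\Bigl\{\,P_{\vec A}(q)+\Cost{q,q},\ V([a,q-1],\vec A+\vec h_q),\ V([q+1,b],\vec A+\vec h_q)\,\Bigr\},
\]
with base case $V([a,a],\vec A)=P_{\vec A}(a)+\Cost{a,a}$ and $V(\emptyset,\cdot)=0$; the answer sought is $V([1,\NumBV],\vec 0)$.

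It then remains to bound the state count. There are $O(\NumBV^2)$ intervals. The coefficient of $\TV^j$ in $\Cost{q,\cdot}$ is $\sum_{i'\le p-j}c_{i'j}q^{i'}$, which for $q\le\NumBV$ has magnitude $O(\NumBV^{\CB+p-j})$; summed over at most \NumBV past queries, $|A_j|=O(\NumBV^{\CB+p-j+1})$. Since the $c_{ij}$ are integers and the queries are integers, each $A_j$ is an integer in its range, and the number of possible $\vec A$ is at most
\[
  \prod_{j=0}^{p}O(\NumBV^{\CB+p-j+1})
  \;=\;O\!\Bigl(\NumBV^{(p+1)(p+2\CB+2)/2}\Bigr),
\]
because $\sum_{j=0}^{p}(\CB+p-j+1)=(p+1)(p+2\CB+2)/2$. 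Each DP transition tries $O(\NumBV)$ queries and performs a constant number of arithmetic operations on poly-sized integers, yielding the claimed $O(\NumBV^{3+(p+1)(p+2\CB+2)/2})$ runtime.

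The main obstacle I expect is a clean verification that $\vec A$ is truly a sufficient state: one must argue that the optimal future strategy depends on the past only through $\vec A$, not through finer details of which queries were asked. This follows because (i) the set of legal future queries and the branching structure depend only on $[a,b]$, and (ii) the total cost for any $\TV\in[a,b]$ decomposes as $P_{\vec A}(\TV)$ plus future per-query contributions, so any two histories inducing the same $([a,b],\vec A)$ face identical future decision problems. A secondary technical point is checking that the state-space count genuinely scales as stated under the worst reachable history; this uses the fact that at most $\NumBV-(b-a+1)\le\NumBV$ queries can precede the state $([a,b],\vec A)$, so the bound on each $|A_j|$ holds uniformly over reachable states.
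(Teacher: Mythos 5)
Your proposal is essentially the paper's proof: the same coefficient-vector sketch of the accumulated cost polynomial, the same dynamic program over (interval, coefficient vector), and the same state-space and running-time arithmetic. One oversight: the theorem permits $\COST[-]\neq\COST[+]$, so the increment added to the state must depend on the direction of recursion --- descending to $[a,q-1]$ (where $\TV<q$, so query $q$ overshot) should add the coefficient vector of $\Cost[+]{q,\cdot}$, while descending to $[q+1,b]$ should add that of $\Cost[-]{q,\cdot}$; a single $\vec h_q$ is only valid when the two cost functions coincide. The paper builds this in by defining each $\gamma_j$ as a sum over $\SEQ[-]$ with coefficients $\alpha^-$ plus a sum over $\SEQ[+]$ with coefficients $\alpha^+$. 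With that correction, your sufficiency argument, the bound $|A_j|=O(\NumBV^{\CB+\DB-j+1})$, and the resulting exponent all match the paper's derivation.
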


\subsection{A Generalization to Trees} 

A straightforward alternative view of Binary Search is to think of $\SET{1, \ldots, \NumBV}$ as comprising the nodes of a path, which in turn is a special case of an undirected tree. Then, in response to a query \Query{i}, the algorithm learns which of the subtrees rooted at children of \Query{i} contains the target node \TV.
Indeed, for the standard search cost model (aiming to minimize the maximum number of queries), there has been a lot of work generalizing results from the line to search for an unknown node in a tree \cite{ben-asher:farchi:newman,cicalese:jacobs:laber:molinaro,cicalese:jacobs:laber:valentim,dereniowski:edge-ranking,dereniowski2017weightedtree,mozes:onak:weimann,onak:parys:generalization,cicalese2009avgtree}.

In this case, the algorithm is given an undirected tree \BT on \NumBV nodes as well as a cost function \COST. An adversary chooses a node \TV of the tree.
The algorithm performs a sequence of queries $\Query{1}, \Query{2}, \ldots$.
In response to a query \Query{i}, unless $\Query{i} = \TV$ (in which case the process terminates), the algorithm learns which of the neighbors of \Query{i} is the unique neighbor strictly closer to \TV in the tree.
The cost incurred by the algorithm with this query is $\GenCost{\Query{i}}{\TV}$.
When \GENCOST is unrestricted, this problem is NP-hard even when \BT is a spider; this follows from a trivial reduction from the weighted vertex query setting (with cost $w_q$ for querying node $q$) by simply setting $\GenCost{q}{\TV} = w_q$ for all \TV.
NP-hardness for the weighted vertex query setting was proved in \cite{cicalese2014weightedtree}.
Thus, we will focus on the case where $\GenCost{\QUERY}{\TV} = \Cost{\Dist[\BT]{\QUERY}{\TV}}$ where \Dist[\BT]{\QUERY}{\TV} is the length of the unique path in \BT between \TV and \QUERY, and \COST is a non-decreasing function.
As before, the algorithm does not observe the incurred cost until the target \TV is revealed.
For this setting, the ideas employed for the path generalize to yield a polynomial-time approximation scheme (PTAS)\footnote{We currently do not know if the restricted problem is NP-hard to solve optimally.}:

\begin{theorem} \label{t:tree}
  In the setting of searching on a tree \BT with \NumBV nodes, assume that the cost function \COST is monotone, and is a constant-degree polynomial.
  In other words, $\GenCost{\QUERY}{\TV} = \Cost{\Dist[\BT]{\QUERY}{\TV}}$ where \COST is a non-decreasing polynomial of degree \DB.
Then, for every $\epsilon > 0$, there is an algorithm with running time $O(\NumBV^{2+(\DB^2/2+3\DB/2+2) \cdot (1+2/\epsilon)})$ which finds a search strategy whose maximum total incurred cost is within a factor of $(1+\epsilon)$ of the optimum.
\end{theorem}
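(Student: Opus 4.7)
The plan is to combine the $k$-cut search tree framework of Berendsohn and Kozma with a sketch-based dynamic program analogous to the one underlying \cref{t:path}. Recall that a $k$-cut strategy restricts the decision tree so that at each step it designates an ordered tuple of $k$ vertices whose removal disconnects the currently feasible subtree into subcomponents; the strategy queries these vertices in order, stops as soon as one matches the target, and otherwise recurses into the subcomponent containing the target. The first step is to invoke (an adaptation of) the Berendsohn--Kozma structural result to show that restricting to $k$-cut strategies loses at most a factor of $1 + O(1/k)$ in worst-case cost. Setting $k = \lceil 2/\epsilon \rceil$ then reduces the task to computing an optimal $k$-cut strategy to within a $(1+\epsilon)$-factor; the rest of the plan is devoted to carrying this out by dynamic programming.

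The DP subproblem is indexed by a feasible region of the tree (a subtree $\BT'$ of $\BT$ together with its entry vertices inherited from the enclosing $k$-cut) along with a sketch of past queries. The crucial observation, generalizing that for the line, is that for any past query $\QUERY$ outside $\BT'$ and any candidate target $\TV \in \BT'$, the unique $\QUERY$-to-$\TV$ path in $\BT$ must pass through a specific entry vertex $v$ of $\BT'$, so $\Dist[\BT]{\QUERY}{\TV} = \Dist[\BT]{\QUERY}{v} + \Dist[\BT']{v}{\TV}$. Writing $a_{\QUERY} = \Dist[\BT]{\QUERY}{v}$ and $b = \Dist[\BT']{v}{\TV}$ and expanding $\Cost{a_{\QUERY} + b}$ by the binomial theorem, the total past cost becomes a polynomial in $b$ whose coefficients are determined (for each entry vertex of $\BT'$ separately) by the power sums $S_j = \sum_{\QUERY} a_{\QUERY}^j$ for $j = 0, \ldots, \DB$. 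Each such $S_j$ is a nonnegative integer of magnitude at most $\NumBV^{j+1}$, so the number of sketch states is polynomial in $\NumBV$, with exponent growing as a low-degree polynomial in $\DB$ and $k$.

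The transition at a subproblem enumerates every ordered $k$-cut of vertices in the current subtree; for each cut and each induced subcomponent $C$, the algorithm computes (i) the cost contribution of the $k$ queries conditioned on $\TV \in C$ --- a polynomial in the tree distances from the $k$ queries to $C$'s entry vertex --- and (ii) the recursive subproblem value on $C$ with the sketch updated to absorb the $k$ new queries. Updating the sketch for queries that were already outside is closed-form via $\Dist[\BT]{\QUERY}{v'} = \Dist[\BT]{\QUERY}{v} + \Dist[\BT]{v}{v'}$ and a second binomial expansion. The stored DP value is the minimum over $k$-cuts of the maximum over subcomponents of (i) plus (ii); once $\TV$ is eventually revealed, the final sketch evaluates to the exact total incurred cost.

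The main technical obstacle will be verifying that the sketches, together with the identity of $\BT'$ and its entry vertices, form a sufficient statistic for the DP --- i.e., the optimal worst-case continuation cost depends on the query history \emph{only} through these quantities. Once this is established (it reduces to the binomial identity above, applied uniformly to each past query and stratified by the entry vertex through which it is routed), the runtime follows by counting polynomially many subtree-and-boundary indices, the sketch state space, and $O(\NumBV^k)$ ordered $k$-cuts per state, with per-transition work polynomial in $\NumBV$ for evaluating the polynomial cost and updating the sketch. Substituting $k = \lceil 2/\epsilon \rceil$ and collecting exponents yields the claimed bound $O\bigl(\NumBV^{2 + (\DB^2/2 + 3\DB/2 + 2)(1 + 2/\epsilon)}\bigr)$.
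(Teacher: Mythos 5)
Your sketch-based DP idea and the power-sum encoding stratified by interface (``entry'') vertices match the paper's approach, but there are two genuine gaps.

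First, you have misdescribed the $k$-cut search tree framework. You write that a $k$-cut strategy ``at each step designates an ordered tuple of $k$ vertices whose removal disconnects the currently feasible subtree into subcomponents; the strategy queries these vertices in order.'' This is not what a $k$-cut STT is. A $k$-cut STT (Definition 5 of Berendsohn--Kozma) still queries one vertex per round; the constraint is that every feasible set $\phi_{\ST}(u)$ that can arise during the search is separated from its complement by at most $k$ edges (equivalently, has at most $k$ interface vertices). This distinction matters for two reasons: (i) the structural conversion from a general STT to a $k$-cut STT --- achieved via iterated leaf-centroid promotions --- operates on the ordinary one-query-per-round search tree and has no analogue for your batched model; and (ii) it changes the DP transition: in the correct framework one enumerates a single next query $q$ (subject to all resulting feasible sets having boundary at most $k$), costing $O(\NumBV)$ per transition, not $O(\NumBV^k)$ for enumerating ordered $k$-tuples. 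Your runtime accounting, which charges $O(\NumBV^k)$ ordered $k$-cuts \emph{per state} on top of ``polynomially many subtree-and-boundary indices,'' does not actually produce the claimed exponent; the $\NumBV^k$ factor in the paper counts the number of $k$-cut subtrees (i.e., DP states), not transitions.

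Second, you treat the $(1+O(1/k))$ cost inflation as a black-box citation of Berendsohn--Kozma. Their result is only for the uniform cost function $\Cost{d}=1$ (where the objective is tree height); extending it to arbitrary monotone cost functions is a nontrivial lemma requiring a new argument. The paper's Lemma~\ref{t:cost_inflation} shows that each leaf-centroid promotion adds at most $\Cost{\Dist{v}{\TV}}$ to the search cost for a target $\TV$, and then charges this increase to at least $\lceil k/2 \rceil - 1$ previously queried vertices that the promoted leaf centroid $v$ \emph{separates} from $\TV$. Monotonicity of $\COST$ is used to argue that each such previously queried vertex $v'$ satisfies $\Cost{\Dist{v'}{\TV}} \geq \Cost{\Dist{v}{\TV}}$, and disjointness of the charge sets $S_{v_i}$ across the promotions on a fixed root-to-$\TV$ path gives the $1 + \frac{1}{\lceil k/2\rceil - 1}$ bound. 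Without this charging argument the approximation guarantee does not follow, so you cannot simply ``invoke (an adaptation of)'' the original result --- the adaptation is itself a key piece of the proof and needs to be supplied.
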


The algorithm is based on the same concise encoding of the history of past queries as was used for the case of the path. In addition, in order to avoid an exponential blowup in the number of subtrees to consider, it focuses the search only on $k$-cut search trees, as defined in \cite{berendsohn2020stt}: search trees with feasible sets consisting of only subtrees which are separated from their complement by at most $k$ edges. 
For constant $k$, the number of such subtrees is only polynomial in \NumBV, and we prove that the best search strategy based only on such $k$-cut trees is a $(1+2/k)$-approximation to the best overall search strategy.

\subsection{Related Work} \label{s:related-work}
Most closely related to our work are the above-mentioned papers by Karp et al.~\cite{karp2000congestion} and Kleinberg and Leighton \cite{kleinberg2003oppa}.
Both are concerned with specific objective functions arising in the respective contexts of congestion control and single item pricing.
In both cases, choosing a value that is too low leads to an opportunity cost, which is simply the difference between the available and selected bandwidth or between the buyer's valuation and the charged price.
When the value is too high, different costs are considered: for item pricing and a ``severe'' version of congestion control, the seller/sender receives nothing, so that the cost is the ground truth value.
In a more ``benign'' version of congestion control, the cost is $\frac{A}{B} \cdot (\QUERY-\TV)$ for some integers $A, B$.
The main goal in \cite{karp2000congestion} and \cite{kleinberg2003oppa} is to study the asymptotics of the incurred cost/regret (in \cite{kleinberg2003oppa} also under different assumptions about the buyer).
However, as part of their investigation, Kleinberg and Leighton \cite{kleinberg2003oppa} show asymptotically optimal strategies in the setting of interacting with the same buyer repeatedly, and Karp et al.~\cite{karp2000congestion} explicitly give dynamic programs for the objective functions we state above, i.e., degree-1 polynomials.
Thus, Theorem~\ref{t:path} and Theorem~\ref{t:multivar_path} can be viewed as generalizing part of the work of Karp et al.~in \cite{karp2000congestion} to a much larger class of cost functions.
The analysis in \cite{karp2000congestion} gives running times of $O(\NumBV^4)$ (for the severe setting) and $O(\NumBV^4 \cdot (A+B))$ (for the benign setting).
The dependence on \NumBV is better than that claimed in our theorems, but this is mostly due to a simplified and slightly suboptimal analaysis in our work, promoting simplicity of the analysis over tight exponents.
Note that we avoid the pseudo-polynomial factor $(A+B)$ in our Theorem~\ref{t:path} --- it arises in the analysis of Karp et al.~\cite{karp2000congestion} due to using summaries in line with those from our Theorem~\ref{t:multivar_path}.

In the literature, there have been quite a few generalizations of the search setup underlying Binary Search.
One of the first is to locate a target ideal of a partially ordered set $P$.
In full generality, this problem is known to be NP-hard \cite{carmo2004randomposet}, and thus, specific types of posets and average-case models \cite{carmo2004randomposet,dereniowski:edge-ranking,linial1983search,linial1985central,linial:saks:searching} have been studied.
Of particular interest to us is the case when the poset is tree-like.
A study of this model was initiated by Dereniowski in \cite{dereniowski:edge-ranking} where he noted that the poset search problem on tree-like posets is equivalent to the search problem on trees with edge queries $e$ that reveal the component of $T \setminus e$ containing the target.

This equivalence, in combination with prior work on optimal edge rankings by Lam and Yue \cite{lam:yue:edge-ranking}, yielded linear-time algorithms for both the edge and vertex search problem on trees. The restriction to trees (or other simpler graphs) is essential, in that Lam and Yue \cite{lam:yue:edge-ranking-hard} also show that the edge and vertex ranking problems on general graphs are NP-hard.
Polynomial-time algorithms for both the vertex and edge ranking problems on trees were then rediscovered, initially with an $O(n^4)$ time algorithm by Ben-Asher, Farchi, and Newman in \cite{ben-asher:farchi:newman}, improved to $O(n^3)$ time by Onak and Parys \cite{onak:parys:generalization}, and finally to linear time by Mozes, Onak, and Weimann \cite{mozes:onak:weimann} using similar methods as in the earlier work on edge-ranking.

There has also been a wide variety of work on weighted and average-case variants of the problem on paths \cite{frankl1993application,knight1988path,navvarro2000path} and trees \cite{cicalese2011weightedtree,cicalese2014weightedtree,dereniowski2017weightedtree,cicalese2009avgtree}.
In particular, the weighted edge-query problem can easily be reduced to the weighted node-query variant \cite{cicalese2011weightedtree}, and it is known that the weighted edge-query problem is NP-complete even on trees of diameter up to 4, trees of degree up to 16 \cite{cicalese2009avgtree}, and spiders \cite{cicalese2014weightedtree}.
Approximation algorithms were first provided by Cicalese et al. \cite{cicalese2011weightedtree}, who provide an $O(\log \NumBV/ \log\log\log \NumBV)$-approximation.
The sublogarithmic terms were improved by Cicalese et al.~in \cite{cicalese2014weightedtree}, and were then brought down to an $O(\sqrt{\log n})$-approximation and a QPTAS by Dereniowski et al.~\cite{dereniowski2017weightedtree}.

In our derivation of a PTAS on trees, we draw crucially on $k$-cut trees and their analysis.
They grew out of a line of research aiming to generalize optimality properties of dynamic binary search trees (BSTs) to the setting where the underlying search space is a tree rather than a path.
The goal here is to design dynamic search trees on trees whose runtime over a sequence of searches asymptotically approximates the runtime of the optimal dynamic search tree in the offline setting. The property of achieving a constant competitive ratio is called \textit{dynamic optimality}; a goal that has attracted much interest, but has yet to be achieved even for BSTs.
That line of work has led to the development of various data structures, such as Splay Trees \cite{sleator:tarjan:amortized}, the first to achieve a competitive ratio of $O(\log n)$, and Tango Trees \cite{demaine2007tango}, which were the first to achieve the best known competitive ratio of $O(\log \log n)$.
Heeringa, Iordan, and Theran \cite{heeringa2011stt} were the first to extend this problem to tree-like partial orders; they achieve a competitive ratio of $O(w)$ where $w$ is the width of the partial order.
This was then taken to the tree setting by Bose et al.~in \cite{bose2019stt}: they adapt ideas used in Tango Trees to yield the same competitive ratio of $O(\log \log n)$.
In particular, they introduced the notion of \emph{Steiner-closed} search trees, which are search trees that are able to extend certain nice properties of paths that BST's exploit to search trees on trees (STTs).
This notion was then generalized to \emph{$k$-cut} search trees by Berendsohn and Kozma \cite{berendsohn2020stt}; doing so allowed them to extend Splay Trees to the setting of STTs.

\section{Preliminaries}\label{s:prelims}
We use $\NN$ to denote the non-negative integers. 
  
We denote the underlying (simple, undirected) tree by $\BT = (\BV, \BE)$, with $\NumBV = \SetCard{\BV}$ vertices.
For a node $v \in \BV$, we use  $T \setminus \{v\}$ to denote the forest obtained by removing the vertex $v$ from $T$.
The \emph{distance} between $u$ and $v$ in \BT is denoted by \Dist[\BT]{u}{v}.
For a subset $S \subseteq \BV$, we write $\Boundary[\BT]{S} = \Set{u \in V \setminus S}{(u, v) \in \BE \text{ for some } v \in S}$ for the \emph{boundary} of $S$ in \BV.
We sometimes omit \BT from notation when it is clear from the context.

When \BT is specifically the line, i.e., we are considering the ``classical'' search setting, we identify $\BV = \SET{1, \ldots, \NumBV}$, with the understanding that $\BE = \Set{(j,j+1)}{j = 1, \ldots, \NumBV-1}$, and $\Dist{u}{v} = \Abs{u-v}$.

\subsection{The Query Model}
We now formally define the search model. We are interested in deterministic algorithms which uniquely identify a target vertex $\TV \in \BT$ chosen adversarially, by explicitly querying the target (even if it can be inferred from the prior queries).
The algorithm knows \BT a priori.
In each round $i$, the algorithm queries a node $\Query{i} \in \BT$.
In response, it learns either that $\Query{i} = \TV$ or which of the neighbors of \Query{i} is closer to \TV than \Query{i}, i.e., which component of $T \setminus \{\Query{i}\}$ contains \TV.
The process terminates when the algorithm queries \TV.
At any point, the set of nodes consistent with all answers to queries is called the \emph{feasible} set.

The cost of querying \Query{i} is \GenCost[\BT]{\Query{i}}{\TV}, with $\GenCost[\BT]{\TV}{\TV} = 0$, i.e., there is no cost for a correct query.
For most of the results on general trees \BT, we assume that this cost only depends on the distance between \Query{i} and \TV, i.e., it is of the form $\GenCost[\BT]{\Query{i}}{\TV} = \Cost{\Dist[\BT]{\Query{i}}{\TV}}$, where \COST is a non-decreasing non-negative function.
When \BT is specifically the line, there is natural meaning to ``guessing too low'' vs.~``guessing too high,'' so we consider a more general model where the cost function can be different depending on whether the guess is too low or too high: $\GenCost{\Query{i}}{\TV} = \Cost{\Query{i} - \TV}$. In this case, we assume that \COST is non-increasing for negative arguments, non-decreasing for positive arguments, and non-negative; in fact, for Theorem~\ref{t:multivar_path}, we consider a further generalization, where \GenCost{\Query{i}}{\TV} can be any monotone bounded-degree bivariate polynomial of \Query{i} and \TV with bounded coefficients.

The cost of a set \SEQ of queries, given a target \TV, is simply the sum of costs $\SeqCost{\SEQ} = \sum_{\QUERY \in \SEQ} \GenCost{\QUERY}{\TV}$.
We are interested only in deterministic algorithms \ALG which never query a vertex $v$ for which it can be inferred from previous answers that $v \neq \TV$;
we call such algorithms \emph{canonical}.
When searching on paths/trees, there is always an optimal canonical algorithm%
\footnote{Consider a query to a vertex \Query{i} which is known not to be the target vertex. Then, there must have been an earlier query \Query{j} with $j < i$ ruling out the possibility that $\Query{i} = \TV$. The response to \Query{j} must have revealed that $\TV \in S$, where $S$ is a particular subtree of \Query{j}, not containing \Query{i}. The response to \Query{i} will only reveal that \TV lies in the subtree $S'$ of \Query{i} containing \Query{j}, and $S' \supseteq S$. (Equality is possible if $\Query{i} = \Query{j}$.) Thus, the response to \Query{i} reveals no new information to the algorithm.};
notice that this is not true for a generalization to arbitrary graphs.
Canonical algorithms \ALG always terminate in at most \NumBV queries, and because they are deterministic, the query sequence $\langle \Query{1}, \Query{2}, \ldots \rangle =: \Seq{\ALG}{\TV}$ is determined completely by the target vertex.
The worst-case cost incurred by the algorithm for a given tree and cost function (omitted from the notation) is then defined as follows:
\begin{align*}
  \AlgCost{\ALG} & = \max_{\TV \in \BV} \SeqCost{\Seq{\ALG}{\TV}}.
\end{align*}
The goal is to design efficient algorithms \ALG to (exactly or approximately) minimize \AlgCost{\ALG}.

\subsection{An Alternative View: Search Trees}

An adaptive algorithm can naturally be viewed as a search tree, encoding the algorithm's decisions.
We formally define search trees as follows:

\begin{definition}[Search Tree on Trees] \label{def:search-trees}
  A \emph{search tree} on a tree (STT) \BT is a rooted and labeled tree \ST with the following property:
  The labels \Lab{u} of the nodes $u \in \ST$ are nodes of \BT, and 
  if the root $r$ of \ST is labeled \Lab{r}, each rooted subtree of $\ST \setminus \SET{r}$ is an STT for a different connected component of $\BT \setminus \SET{\Lab{r}}$; furthermore, the degree of $r$ in \ST equals the degree of \Lab{r} in \BT.
\end{definition}

Notice that this definition implies that each leaf node $u$ of \ST must be the STT for a singleton node; if not, then whichever node is chosen as \Lab{u} would have degree at least 1, implying that $u$ would have to have at least one child.
Furthermore, the labeling \LAB defines a bijection between the nodes of \ST and those of \BT.
\begin{definition} \label{def:feasible-node-set}
  Let \ST be an STT.
  For every node $u$ of \ST, we define $$\Feas[\ST]{u} = \Set{\Lab{x}}{x \text{ is a descendant of } u \text{ or } x=u}.$$
  We call \Feas[\ST]{u} the \emph{feasible set} associated with node $u$ (for reasons that will become obvious momentarily).
\end{definition}
There is a natural correspondence between STTs on \BT and deterministic search strategies \ALG on \BT.
\begin{enumerate}
\item For every STT \ST with root $r$ labeled \Lab{r}, the corresponding search strategy first queries \Lab{r}.
If the query response confirms that \Lab{r} is the target, the search terminates.
Otherwise, the response reveals the connected component $S$ of $\BT \setminus \SET{\Lab{r}}$ containing \TV, and the algorithm continues by recursively applying the transformation to the corresponding subtree of \ST.
Induction on the number of queries in the sequence easily shows that the set of feasible nodes at any point in the search (associated with a node $u$) equals \Feas[\ST]{u}.
  
\item Conversely, consider any canonical algorithm \ALG.
Its first query \Query{1} will be the vertex $\Lab{r} = \Query{1}$ of the root $r$ of the STT.
For each answer other than $\Query{1} = \TV$, the response reveals the connected component $S$ of $\BT \setminus \SET{\Query{1}}$ containing \TV.
Because the algorithm is canonical, it will not query any vertex outside of $S$ in future queries.
Thus, after the first query, the algorithm will be canonical for the component $S$.
By induction, the remaining execution of the algorithm has a corresponding STT.
Then, \ST is obtained by having each of the subtrees corresponding to components $S$ of $\BT \setminus \SET{\Query{1}}$ as the subtrees of $r$.
Again, induction easily shows that the set of feasible nodes at any point in the search (associated with a node $u$) equals \Feas[\ST]{u}.
\end{enumerate}

For any vertex $u \in \ST$, we write \Path[\ST]{u} for the set of vertices on the root-to-$u$ path in \ST, including $r$ and $u$.
Then, for any target node \TV, the cost of the search strategy corresponding to \ST can be written as
\begin{align*}
  \TreeCost{\ST}{\TV} & = \sum_{u \in \Path[\ST]{t}} \GenCost{\Lab{u}}{\TV}, 
\end{align*}
the total cost of queries performed according to the labels of the nodes in the search tree on the unique path all of whose vertices have \TV in their feasible node set.

\section{An Approximation Guarantee for Standard Binary Search}\label{s:binary}
Here, we show that the standard Binary Search Algorithm, which always queries the midpoint of the remaining interval, is a 4-approximation algorithm for any symmetric monotone cost function \COST, proving Theorem~\ref{t:binary-search}. 
The factor of 4 in the analysis is likely not tight. 
However, as we show in \cref{ss:approxlb}, even when $\Cost{x} = x$ is linear, Binary Search is no better than a $4/(\sqrt{33} - 3) \approx 1.457$ approximation, and for general cost functions, it is no better than a 2-approximation.

We denote the Binary Search algorithm by \BS, and begin with the following straightforward upper bound on the total cost incurred by $\BS$:
  $$\AlgCost{\BS} \leq \sum_{i=1}^{\Floor{\log_2(\NumBV)}} \Cost{\Floor{\NumBV/2^i}}.$$

To prove this bound, note that the median of a path of length $\ell$ is at distance at most $\Floor{\ell/2}$ from all points of the path.
When Binary Search makes its \Kth{i} query, the length of the remaining subpath is at most $\Floor{\NumBV/2^{i-1}}$.
Thus, the cost of the \Kth{i} query is at most $\Cost{\Floor{\Floor{\NumBV/2^{i-1}}/2}} \leq \Cost{\Floor{\NumBV/2^i}}$.
When the remaining subpath has length 1 (or 0), the remaining cost is 0.
Thus, the total cost is incurred in rounds $i$ with $\Floor{\NumBV/2^{i-1}} \geq 2$, which are $i \leq \log_2 \NumBV$.

The key part of the proof is establishing lower bounds on the cost of an arbitrary --- though without loss of generality canonical --- algorithm \ALG.
We will show the following three bounds:
\begin{align}
  \AlgCost{\ALG} & \geq \Cost{\Floor{\NumBV/2}}  \label{eqn:lower-bound-half} \\
  \AlgCost{\ALG} & \geq \Cost{\Floor{\NumBV/4}} + \Cost{\Floor{\NumBV/8}}  \label{eqn:lower-bound-quarter} \\
  2 \AlgCost{\ALG} & \geq \sum_{i=4}^{\Floor{\log_2(\NumBV)}} \Cost{\Floor{\NumBV/2^i}}.  \label{eqn:lower-bound-sixteenth}
\end{align}

Using these bounds, the approximation guarantee of 4 then follows immediately:
\begin{align*}
  4 \AlgCost{\ALG} & \geq \Cost{\Floor{\NumBV/2}} + (\Cost{\Floor{\NumBV/4}} + \Cost{\Floor{\NumBV/8}})
                     + \sum_{i=4}^{\Floor{\log_2(\NumBV)}} \Cost{\Floor{\NumBV/2^i}}\\
  \; &= \; \sum_{i=1}^{\Floor{\log_2(\NumBV)}} \Cost{\Floor{\NumBV/2^i}}
  \; \geq \; \AlgCost{\BS}.
\end{align*}

The bound from \cref{eqn:lower-bound-half} is straightforward: in response to a query of \Query{1}, the adversary chooses $\TV = 1$ or $\TV = \NumBV$, whichever is further from \Query{1}.
This node has distance at least $\Floor{\NumBV/2}$ from \Query{1}, so the first step alone incurs cost $\Cost{\Floor{\NumBV/2}}$. 

The bound from \cref{eqn:lower-bound-quarter} is also fairly direct.
In response to a query of \Query{1} (w.l.o.g.~$\Query{1} \geq 1 + \Floor{\NumBV/2}$), the adversary points in the direction of the larger remaining set; here, to the left.
If the next query is $\Query{2} \geq \Floor{\NumBV/8}+1$, the target is revealed to be at the left endpoint, and the total cost is at least $\Cost{\Query{1}-1} + \Cost{\Query{2}-1} \geq \Cost{\Floor{\NumBV/2}} + \Cost{\Floor{\NumBV/8}} \geq \Cost{\Floor{\NumBV/4}} + \Cost{\Floor{\NumBV/8}}$.
Otherwise (when $\Query{2} \leq \Floor{\NumBV/8}$), the target is revealed to be $\Ceiling{\NumBV/4}$. As a result, the total cost is at least $\Cost{1+\Floor{\NumBV/2}-\Ceiling{\NumBV/4}} + \Cost{\Ceiling{\NumBV/4} - \Floor{\NumBV/8}} \geq \Cost{\Floor{\NumBV/4}} + \Cost{\Floor{\NumBV/8}}$.

The bound from \cref{eqn:lower-bound-sixteenth} is significantly more involved.
We will analyze an adversary which in response to any query $q$ always points in the direction of the larger remaining subpath.
We remark that such an adversary may not be optimal (i.e., worst-case) when the cost function is not constant; for instance, it is typically better for the adversary to point away from more of the prior queries, even if the resulting subpath is shorter.
However, even this non-worst-case adversary is sufficient to establish the claimed lower bound.

Let \TV be the target revealed at the end.
Let $j=0, \ldots, \ell$ be the iterations of \ALG, and for each $j$,
let $L_j, R_j$ be such that the feasible set is $[L_j+1, R_j-1]$ after $j$ queries have been made; that is, $L_j, R_j$ are the infeasible nodes immediately to the left and right of the feasible set.
The fact that the feasible interval keeps shrinking implies that
\[
  0 = L_0 \leq L_1 \leq \cdots \leq L_{\ell} = \TV-1 < \TV + 1 = R_{\ell} \leq R_{\ell-1} \leq \cdots \leq R_1 \leq R_0 = \NumBV+1.
\]
For each $j$, let $z_j$ be the one of $L_j$ and $R_j$ maximizing the distance to \TV (the choice is arbitrary if they have the same distance), and let $d_j = \Dist{\TV}{z_j} = \Abs{\TV-z_j}$.

First, we claim that $j(0) = 1$ satisfies that $\frac{\NumBV}{4 \cdot 3^0} < d_{j(0)}$.
This holds because the adversary's choice of always pointing in the direction of the larger possible remaining feasible set ensures that $R_1 - L_1 > \NumBV/2$, so at least one of $L_1, R_1$ is at distance more than $\NumBV/4$ from \TV.

Next, we claim that for each $i=1, \ldots, 1 + \Floor{\log_3 (\NumBV/4)}$, there exists\footnote{There may be multiple such $j$, in which case $j(i)$ is an arbitrary one of them.} a $j=j(i) \geq 1$ with $\frac{\NumBV}{4 \cdot 3^i} < d_j \leq \frac{\NumBV}{4 \cdot 3^{i-1}}$.
Assume for contradiction that there exists no such $j$.
Let $j$ be largest such that $d_j > \frac{\NumBV}{4 \cdot 3^{i-1}}$; such a $j$ must exist, because again $j=1$ is an option.
Also, note that the upper bound on $i$ ensures that in step $j$, the feasible set contains more than one node.
By symmetry, assume without loss of generality that $z_j = R_j$.
If $|R_{j+1} - \TV| > \frac{\NumBV}{4 \cdot 3^{i-1}}$ or $|L_{j+1} - \TV| > \frac{\NumBV}{4 \cdot 3^{i-1}}$, this would contradict the maximality of $j$.
Hence, by the choice of $i$ (ruling out the intermediate range of distances), we obtain that $|R_{j+1} - \TV| \leq \frac{\NumBV}{4 \cdot 3^i}$ and $|L_{j+1} - \TV| \leq \frac{\NumBV}{4 \cdot 3^i}$.
This also implies that $R_{j+1} < R_j$, and thus $L_{j+1} = L_j$ by the update process.
As a result, query $j$ must have been at $R_{j+1}$, and the adversary pointed to the left, implying that $R_{j+1}-L_j \geq R_j-R_{j+1}$.
By the triangle inequality (applied with \TV in the middle), $R_{j+1} - L_j \leq \frac{2 \NumBV}{4 \cdot 3^i}$, so $R_j - R_{j+1} \leq \frac{2 \NumBV}{4 \cdot 3^i}$ as well.
Again by the triangle inequality (with $R_{j+1}$ in the middle), $R_j - \TV \leq \frac{\NumBV}{4 \cdot 3^i} + \frac{2 \NumBV}{4 \cdot 3^i} = \frac{\NumBV}{4 \cdot 3^{i-1}}$, contradicting that $d_j > \frac{\NumBV}{4 \cdot 3^{i-1}}$.

In summary, so far, we have shown that for each $i=0, \ldots, 1 + \Floor{\log_3 (\NumBV/4)}$, there exists a point $y_i = z_{j(i)}$ with $j(i) \geq 1$ that was immediately to the left or right of the feasible interval at some point during the execution of \ALG, and was at distance $d_{j(i)} \in (\frac{\NumBV}{4 \cdot 3^i}, \frac{\NumBV}{4 \cdot 3^{i-1}}]$ from \TV (except for $i=0$, where we did not establish an upper bound).
Because the corresponding distance intervals are disjoint, all of the points $y_i$ are distinct.
Furthermore, whenever $y_i \neq 0$ and $y_i \neq \NumBV + 1$, the fact that $y_i$ was immediately to the left/right of the feasible region must have been the result of querying $y_i$ during the algorithm.
Hence, for each $i$ with $y_i \notin \SET{0, \NumBV+1}$, \ALG must have incurred cost at least $\Cost{\Ceiling{\frac{\NumBV}{4 \cdot 3^i}}}$ with some query, and these queries are distinct for different $i$.
Because one of $\SET{0, \NumBV+1}$ was only an endpoint for iteration 0, at most \emph{one} (not two) of the $y_i$ can be in $\SET{0, \NumBV+1}$, so we can lower-bound the total cost of \ALG as follows:
\begin{align}
  \AlgCost{\ALG} & \geq \sum_{i=0, y_i \notin \SET{0, \NumBV}}^{1+\Floor{\log_3 (\NumBV/4)}} \Cost{\Ceiling{\frac{\NumBV}{4 \cdot 3^i}}}
                 \; \geq \; \sum_{i=1}^{1+\Floor{\log_3 (\NumBV/4)}} \Cost{\Ceiling{\frac{\NumBV}{4 \cdot 3^i}}} \label{eqn:power-of-three-bound}
\end{align}
by monotonicity of \COST.

Finally, we show that 
\begin{align*}
  \sum_{i=1}^{1+\Floor{\log_3 (\NumBV/4)}} \Cost{\Ceiling{\frac{\NumBV}{4 \cdot 3^i}}}
  & \geq \sum_{i=1}^{\Floor{\log_4(\NumBV)}-1} \Cost{\Floor{\frac{\NumBV}{4^{i+1}}}} 
\\  \sum_{i=1}^{1+\Floor{\log_3 (\NumBV/4)}} \Cost{\Ceiling{\frac{\NumBV}{4 \cdot 3^i}}}
  & \geq \sum_{i=1}^{\Floor{\log_4(\NumBV/2)}-1} \Cost{\Floor{\frac{\NumBV}{2 \cdot 4^{i+1}}}};
\end{align*}
then, adding these two inequalities, observing that
$$\sum_{i=1}^{\Floor{\log_4(\NumBV)}-1} \Cost{\Floor{\frac{\NumBV}{4^{i+1}}}} +
\sum_{i=1}^{\Floor{\log_4(\NumBV/2)}-1} \Cost{\Floor{\frac{\NumBV}{2 \cdot 4^{i+1}}}}
= \sum_{i=4}^{\Floor{\log_2(\NumBV)}} \Cost{\Floor{\frac{\NumBV}{2^i}}},$$
and combining the left-hand sides with \cref{eqn:power-of-three-bound} completes the proof.
To see the two inequalities, observe that the sums on the left-hand sides have more terms, and for any given $i$, we have that
$\Ceiling{\frac{\NumBV}{4 \cdot 3^i}} \geq \frac{\NumBV}{4 \cdot 3^i} \geq
\frac{\NumBV}{4^{i+1}} \geq \Floor{\frac{\NumBV}{4^{i+1}}}$, and similarly for the second sum (where the terms under the \COST function are smaller by another factor of 2).
The result now follows due to the monotonicity of \COST.

\subsection{Lower Bounds on Approximation Ratio}
\label{ss:approxlb}
We show that for general (symmetric) monotone cost functions, Binary Search is no better than a 2-approximation, and even when the cost function is linear, the approximation ratio is no better than $4/(\sqrt{33} - 3) \approx 1.457$.

To prove the lower bound of 2, consider the cost function $h(x)$ which is 1 if $x \geq \Floor{n/4}$ and 0 otherwise. To avoid rounding in the presentation\footnote{The result holds for arbitrary $n$.}, assume that $n=2^k-1$ for some $k$.
If the target is at 1, because the first two queries are at $\Ceiling{n/2}, \Ceiling{n/4}$, the cost of Binary Search is 2.
On the other hand, consider the following (optimal) strategy.
First query at $\Ceiling{\frac{n}{2}}$; w.l.o.g., assume that the target is less than $\Ceiling{\frac{n}{2}}$. Then, perform a query at $\Ceiling{\frac{n}{6}}$. If the target is less than $\Ceiling{\frac{n}{6}}$, the query is free, and so is a subsequent linear search over the remaining interval.
If the target is greater than $\Ceiling{\frac{n}{6}}$, then one performs a query at $\Ceiling{\frac{n}{3}}$. The combined cost of the queries at $\Ceiling{\frac{n}{6}}, \Ceiling{\frac{n}{3}}$, and $\Ceiling{\frac{n}{2}}$ is at most 1, since the adversary cannot choose a target that is at least a distance $\Floor{\frac{n}{4}}$ from more than one of the three queries. At this point, any remaining queries results in a cost of 0. Therefore, the cost of this strategy is 1, establishing a lower bound of 2 on the approximation ratio of Binary Search.

For linear cost functions $h(x) = x$, we consider the search on the continuous interval $[0,1]$ with the same search and cost model.
This is solely to avoid rounding in the presentation --- it is easy to verify that as $n \to \infty$, the lower bound converges to the one for the continuous model.
First observe that if the target is at 0, then Binary Search incurs cost $\sum_{i=1}^\infty 2^{-i} = 1$.
Now consider the following strategy (which is not optimal): first query at $\frac{1}{2}$; w.l.o.g., the target is less than $\frac{1}{2}$.
Then, query at $\gamma = \frac{\sqrt{33} - 5}{8}$. If the target is less than $\gamma$, perform binary search on the interval $[0, \gamma]$.
If the target is greater than $\gamma$, recurse on $[\gamma, \frac{1}{2}]$.
Let $C$ denote the cost of this search strategy.
In the first case, the cost is at most $\half + \gamma$ for the first two queries, plus $\gamma$ for the subsequent Binary Search, for a total of $\half + 2\gamma$.
In the second case, the cost is exactly $\half-\gamma$ for the first two queries combined (regardless where the target actually is within the interval), plus $C \cdot (\half-\gamma)$ for the recursive search, which is performed on an interval of size $\half-\gamma$; here, we used linearity of the cost function.
Hence, we have derived that 
\begin{align*}
    C\leq \max\left(\frac{1}{2} + 2 \gamma, \left(\frac{1}{2} - \gamma\right) \cdot (C + 1)\right).
\end{align*}

Substituting $\gamma = \frac{\sqrt{33}-5}{8}$ and $C = \frac{\sqrt{33}-3}{4}$, we can see that both terms are exactly equal to $C$.
Hence, $C$ is an upper bound on the cost of this strategy, and we obtain a lower bound of $1/C = \frac{4}{\sqrt{33}-3} \approx 1.457$ on the approximation ratio of Binary Search for linear cost functions.

\section{An Exact Algorithm for Paths}\label{s:paths}
We assume that the cost function $\Cost{x}$ is the form $\Cost{x} = \Cost[-]{-x}$ if $x < 0$ and $\Cost{x} = \Cost[+]{x}$ if $x > 0$, where
\begin{align*}
  \Cost[-]{x} & = \sum_{m=0}^{\DB} \PolCo[-]{m} x &
  \Cost[+]{x} & = \sum_{m=0}^{\DB} \PolCo[+]{m} x^k
\end{align*}
are polynomials of degree (at most) \DB.

To derive an optimal algorithm, consider the set of past queries that were too low (denoted by \SEQ[-]) or too high (denoted by \SEQ[+]).
For each query made by an algorithm, the response reveals which of the two sets the query belongs to.
Write $L = \max \SEQ[-]$ and $R = \min \SEQ[+]$; thus, the feasible set of targets at this point is the interval $\SET{L+1, L+2, \ldots, R-1}$.

For any set \SEQ of queries and $j \in \SET{0, \ldots, \DB}$, write $$\CumCo[\SEQ]{j}  = \sum_{q \in \SEQ} q^j.$$                

We can think of the \CumCo[\SEQ]{j} as \emph{sketches} of the sequences of prior queries. While there may be exponentially many possible sequences of queries, the sketches compress all relevant information so that only polynomially many sketches can arise. This is done by sketching the query sequence as coefficients of a re-parameterized cost polynomial.

When the target is $\TV \geq L$ (which it is guaranteed to be), the total cost for the queries in \SEQ[-] can be written as
\begin{align*}
  \SeqCost{\SEQ[-],\TV}
     & = \sum_{\QUERY \in \SEQ[-]} \Cost[-]{\TV - \QUERY} 
     \; = \;  \sum_{\QUERY \in \SEQ[-]} \sum_{m=0}^{\DB} \PolCo[-]{m} (\TV - \QUERY)^m\\
     \; & = \;  \sum_{\QUERY \in \SEQ[-]} \sum_{m=0}^{\DB} \PolCo[-]{m} \cdot \sum_{j=0}^m \binom{m}{j} \cdot (-1)^{m-j} \TV^j \QUERY^{m-j}
  \\ & =   \sum_{j=0}^{\DB} \TV^j \cdot \sum_{m=j}^{\DB} \PolCo[-]{m} \cdot \binom{m}{j} \cdot (-1)^{m-j} \cdot \CumCo[{\SEQ[-]}]{m-j}.
\end{align*}

By a similar calculation, since the target $\TV \leq R$, the total cost for the queries in \SEQ[+] can be written as
\begin{align*}
  \SeqCost{\SEQ[+],\TV}
  & =   \sum_{j=0}^{\DB} \TV^j \cdot \sum_{m=j}^{\DB} \PolCo[+]{m} \cdot \binom{m}{j} \cdot (-1)^{j} \cdot \CumCo[{\SEQ[+]}]{m-j}.
\end{align*}

First, given the \CumCo[{\SEQ[-]}]{j} and \CumCo[{\SEQ[+]}]{j}, the total cost of a target \TV can be directly calculated as $\SeqCost{\SEQ[-],\TV} + \SeqCost{\SEQ[+],\TV}$.
Second, notice that the \CumCo[\SEQ]{j} are linear, in the sense that if $\SEQ \cap \SEQ['] = \emptyset$, then $\CumCo[{\SEQ \cup \SEQ[']}]{j} = \CumCo[\SEQ]{j} + \CumCo[{\SEQ[']}]{j}$ for all $j$. This fact is very useful in formulating the recurrence relation.
Third, because \SEQ can contain at most \NumBV queries, each of which is bounded by \NumBV, all of the \CumCo[\SEQ]{j} are always integers and bounded by $\NumBV^{\DB+1}$.

Together, these observations enable an approach based on dynamic programming.
Subproblems are characterized by the left and right endpoint $L, R$ of the interval of remaining candidates as well as the coefficients $(\CumCo[-]{j})_{j=0}^{\DB}, (\CumCo[+]{j})_{j=0}^{\DB}$ of the cost polynomial.
Importantly, because the coefficients fully determine the cost that a target \TV will cause, the past queries \SEQ are sufficiently encoded in the coefficients \CumCo[-]{j}, \CumCo[+]{j}, and need not be explicitly kept track of.

Consider an input instance $\langle L, R, (\CumCo[-]{j})_{j=0}^{\DB}, (\CumCo[+]{j})_{j=0}^{\DB} \rangle$.
The optimum solution queries some $\QUERY \in \SET{L+1, \ldots, R-1}$.
In response, the adversary determines whether $\TV = \QUERY$, $\TV < \QUERY$, or $\TV > \QUERY$.
(If $\QUERY = L+1$ or $\QUERY = R-1$, the corresponding outcomes $\TV < \QUERY$ and $\TV > \QUERY$ are impossible, a fact we will encode by stating that the cost for the empty set is $-\infty$.)
In these three cases, the outcomes are as follows:
\begin{itemize}
\item If $\TV = \QUERY$, then the cost is obtained from the equations derived earlier and equals
\begin{align}
  \SeqCost{\langle (\CumCo[-]{j})_{j=0}^{\DB}, (\CumCo[+]{j})_{j=0}^{\DB} \rangle,\QUERY}
   = \sum_{j=0}^{\DB} \QUERY^j \cdot \sum_{m=j}^{\DB} \binom{m}{j} \cdot
    \left( \PolCo[-]{m} \cdot (-1)^{m-j} \cdot \CumCo[-]{m-j} + \PolCo[+]{m} \cdot (-1)^{j} \cdot \CumCo[+]{m-j} \right). \label{eqn:dist-poly-cost}
\end{align}
\item If $\TV < \QUERY$, then the target is known to lie in the interval $\SET{L+1, \ldots, \QUERY-1}$, and \QUERY was too high. Thus, for the remaining subproblem (with $R$ updated to \QUERY), the cost function is characterized by the same coefficients \CumCo[-]{j}, whereas the \CumCo[+]{j} are updated to $\CumCo[+]{j} + \CumCo[\SET{\QUERY}]{j}$. This subproblem is solved optimally.
\item If $\TV > \QUERY$, then similarly, the target is known to lie in the interval $\SET{\QUERY+1, \ldots, R-1}$, and \QUERY was too low. Thus, for the remaining subproblem (with $L$ updated to \QUERY), the cost function is characterized by the same coefficients \CumCo[+]{j}, whereas the \CumCo[-]{j} are updated to  $\CumCo[-]{j} + \CumCo[\SET{\QUERY}]{j}$. Again, this subproblem is solved optimally.
\end{itemize}
Thus, the optimal cost is characterized by the following recurrence for $R > L+1$ (with $\OPT(L, R, (\CumCo[-]{j})_{j=0}^{\DB}, (\CumCo[+]{j})_{j=0}^{\DB}) = -\infty$ when $R \leq L$):
\begin{align}
  \OPT(L, R, (\CumCo{j})_{j=0}^{\DB})
  =
    \min_{\QUERY = L+1, \ldots, R-1} \max \Big( &
       \SeqCost{\langle (\CumCo[-]{j})_{j=0}^{\DB}, (\CumCo[+]{j})_{j=0}^{\DB} \rangle,\QUERY}, \OPT(L, \QUERY, (\CumCo[-]{j})_{j=0}^{\DB}, (\CumCo[+]{j})_{j=0}^{\DB}) + (\CumCo[\SET{\QUERY}]{j})_{j=0}^{\DB}), \nonumber
  \\ &  \OPT(\QUERY, R, (\CumCo[-]{j})_{j=0}^{\DB} + (\CumCo[\SET{\QUERY}]{j})_{j=0}^{\DB}, (\CumCo[+]{j})_{j=0}^{\DB})
    \Big).\nonumber
\end{align}

This Dynamic Program can be solved bottom-up (from smallest to largest intervals $\SET{L, \ldots, R}$) in the standard way.
Because the coefficients $\CumCo[-]{j}, \CumCo[+]{j}$, being the sums of at most \NumBV \Kth{j} powers of integers bounded by \NumBV, are upper-bounded by $\NumBV^{j+1}$, encoding these $2(\DB+1)$ coefficients as well as the two endpoints $L, R$ requires a table of size $\NumBV^2 \cdot \prod_{j=0}^{\DB} \NumBV^{2(j+1)} = \NumBV^{2 + (\DB+1) \cdot (\DB+2)} = \NumBV^{\DB^2+3\DB+2}$; each entry can be computed in time $O(\NumBV)$, giving the claimed bound. 

This gives the desired polynomial-time algorithm, and proves \cref{t:path}.

\section{Position-Dependent Polynomial Costs}
\label{s:position-dependent}
Theorem~\ref{t:path} applies only to the case where the cost function \GenCost{\QUERY}{\TV} depends only on the \emph{distance} between the target and the query.
Thus, our results do not apply to the cost function for regret in the repeated posted-price and congestion control scenario, since there, the regret for $\QUERY > \TV$ is \TV, and thus dependent on \TV, but not through its distance from \QUERY.
Motivated by this, we study a more general form of cost functions, which are arbitrary bounded-degree polynomials of \QUERY and \TV.
The price we have to pay is to assume that all of the polynomial's \emph{coefficients} are bounded by a polynomial in \NumBV as well.
The more general form of the cost function is the following:

\begin{align*}
    \Cost[-]{\QUERY,\TV} & = \sum_{i, j \in \SET{0, \dots, \DB}: i + j \leq \DB} \PolCo[-]{i, j} \QUERY^i \TV^j
  & \Cost[+]{\QUERY,\TV} & = \sum_{i, j \in \SET{0, \dots, \DB}: i + j \leq \DB} \PolCo[+]{i, j} \QUERY^i \TV^j.
\end{align*}
All coefficients \PolCo{i,j} are integers and bounded by $O(n^s)$ for a constant $s$.
Similar to the analysis in \cref{s:paths}, the total cost for the queries $\SEQ[-], \SEQ[+]$ can be written as

\begin{align*}
  \SeqCost{\SEQ[-] \cup \SEQ[+],\TV}
     & = \sum_{\QUERY \in \SEQ[-]} \Cost[-]{\QUERY,\TV} + \sum_{\QUERY \in \SEQ[+]} \Cost[+]{\QUERY,\TV} \\
       \; & = \; \sum_{\QUERY \in \SEQ[-]} \sum_{i, j \in \{0, \dots, \DB\}: i + j \leq \DB} \PolCo[-]{i, j} \QUERY^i \TV^j
             + \sum_{\QUERY \in \SEQ[+]} \sum_{i, j \in \{0, \dots, \DB\}: i + j \leq \DB} \PolCo[+]{i, j} \QUERY^i \TV^j
  \\ &  = \sum_{j=0}^{\DB} \TV^j \cdot \left( \sum_{\QUERY \in \SEQ[-]} \sum_{k = 0}^{\DB - j} \PolCo[-]{j, k} \QUERY^k 
                                             + \sum_{\QUERY \in \SEQ[+]} \sum_{k = 0}^{\DB - j} \PolCo[+]{j, k} \QUERY^k \right).
\end{align*}
Writing
\begin{align*}
  \CumCoT[{\SEQ[-] \cup \SEQ[+]}]{j}  & := \sum_{\QUERY \in \SEQ[-]} \sum_{k = 0}^{\DB - j} \PolCo[-]{j, k} \QUERY^k
                                         + \sum_{\QUERY \in \SEQ[+]} \sum_{k = 0}^{\DB - j} \PolCo[+]{j, k} \QUERY^k,
\end{align*}
we see that for a fixed past query set $\SEQ[-] \cup \SEQ[+]$, the cost for a given target $t$ can be evaluated knowing only the $\CumCoT[{\SEQ[-] \cup \SEQ[+]}]{j}$, as
\begin{align}
  \SeqCost{\SEQ[-] \cup \SEQ[+],\TV} & = \sum_{j=0}^{\DB} \CumCoT[{\SEQ[-] \cup \SEQ[+]}]{j} \cdot \TV^j. \label{eqn:general-poly-cost}
\end{align}
The $\CumCoT[{\SEQ[-] \cup \SEQ[+]}]{j}$ are also linear in the sets in the same sense as before, so $\langle L, R, (\CumCoT{j})_{j=0}^{\DB} \rangle$ provide a sufficient state encoding to enable the same Dynamic Program to apply, with the cost of revealing the target \TV given by \cref{eqn:general-poly-cost} instead of \cref{eqn:dist-poly-cost}.

Because all of the $\PolCo[-]{k,j}, \PolCo[+]{k,j}$ are bounded by $O(\NumBV^{\CB})$, $\SEQ[-] \cup \SEQ[+]$ is a set of size at most \NumBV, and $\QUERY \leq \NumBV$, all of the \CumCoT{j} are bounded in absolute value by $O(\NumBV^{\CB + \DB -j + 1})$.
Thus, the table has size $n^2 \cdot \prod_{j=0}^{\DB} n^{\CB + \DB - j +1} = O(n^{2 + (p+1) \cdot (p + 2s + 2)/2}),$ and each entry is computed in time $O(\NumBV)$.
This proves Theorem~\ref{t:multivar_path}.

\section{A PTAS for Trees}\label{s:trees}
In extending the approach of \cref{s:paths} to general trees, we face the following obstacles:

\begin{enumerate}
  \item Any subtree of \BT could arise as a remaining feasible set. Trying to implement Dynamic Programming in the obvious way would thus force an algorithm to consider exponentially many subtrees (rather than $O(\NumBV^2)$ for the path).
  \item The construction of the coefficients of the polynomial to summarize the past queries relied on the fact that each query was to the left or right of the remaining feasible interval. Such a simple partition (with corresponding signs) is not possible for more general subtrees.
\end{enumerate}

In order to address the first challenge, we use the $k$-cut STT framework introduced by Berendsohn and Kozma \cite{berendsohn2020stt}.
A $k$-cut search tree is an STT such that every feasible set is a subtree separated from its complement by at most $k$ edges.
Thus, there are at most $O(n^k)$ such subtrees, and our algorithm will enumerate only over those subtrees.
The key is then to analyze the loss in the approximation guarantee resulting from this restriction.
To address the second challenge, we then develop a method of encoding the cost imposed by the prior queries in a way that allows combining the encodings of different sets when the queried vertex sets merge by merging their respective subtrees.

\subsection{$k$-cut Search Trees on Trees} \label{s:k-cut}

\begin{definition}[$k$-Cut Tree, Definition 5 of \cite{berendsohn2020stt}] \label{def:kSTT}
  For $k \geq 1$, an STT \ST on tree $\BT=(\BV,\BE)$ is a \emph{$k$-cut tree} if $\SetCard{\Boundary[\BT]{\Feas[\ST]{u}}} \leq k$ for every $u \in \ST$,
  i.e., if each feasible set of vertices that can occur under this search tree is separated from the eliminated vertices by at most $k$ edges (or, equivalently, it has at most $k$ adjacent vertices).
\end{definition}

We analyze the cost increase that is incurred due to focusing only on $k$-cut STTs.
Our main lemma is the following generalization of a result of \cite{berendsohn2020stt}:

\begin{lemma} \label{t:cost_inflation}
  Let $\BT=(\BV,\BE)$ be a tree, $k \geq 3$, $\COST: \NN \to \RR$ a monotone non-decreasing cost function, and $\GenCost{\QUERY}{\TV} = \Cost{\Dist{\QUERY}{\TV}}$ the cost associated with query \QUERY when the target is \TV.
  Let \ST be an STT on \BT.
  Then, there exists a $k$-cut STT \STP on \BT such that for all target vertices \TV:
  \begin{align*}
      \TreeCost{\STP}{\TV} & \leq \left( 1 + \frac{1}{\Ceiling{k/2} - 1} \right) \cdot \TreeCost{\ST}{\TV}.
  \end{align*}
  In particular, the cost associated with the search tree (the maximum cost over all targets) increases by at most a factor of $\left( 1 + \frac{1}{\Ceiling{k/2} - 1} \right)$.
\end{lemma}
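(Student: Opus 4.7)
The plan is to extend the $k$-cut tree construction of Berendsohn and Kozma~\cite{berendsohn2020stt}, which proved the analogous bound for unweighted (depth) costs, to the setting of arbitrary monotone cost functions $h$. Their construction builds $T'$ by running $T$ top-down and inserting ``centroid'' pre-queries whenever the feasible set $F$ would otherwise have $|\partial F| > k$; my task is to show that the same construction yields the sharper cost bound, not merely a depth bound.

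I would first establish a separator lemma: for any subtree $F \subseteq V$ with $|\partial F| = k+1$, there exists a vertex $c \in F$ (a centroid of the Steiner tree of $\partial F$ inside $F$) such that every component of $F \setminus \{c\}$ has boundary in $T$ of size at most $k$. By the centroid theorem, $c$ splits $\partial F$ into groups of size at most $\lceil (k+1)/2 \rceil$, and each component inherits $c$ as a new boundary vertex, giving at most $\lceil (k+1)/2 \rceil + 1 \leq k$ when $k \geq 3$. I would then build $T'$ recursively: at each simulated node of $T$ with feasible set $F$, if $|\partial F| \leq k$, query the same vertex as $T$ and recurse on each child subtree; otherwise insert the centroid pre-query $c$ and recurse on each component of $F \setminus \{c\}$ using an STT restricted to that component.

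For a fixed target $t$, I would partition the root-to-$t$ path in $T'$ into \emph{epochs} separated by centroid pre-queries. The real queries on this path form a subsequence of the queries on the root-to-$t$ path in $T$, so their combined cost is at most $\TreeCost{\ST}{t}$. After each centroid, the boundary is at most $\lceil k/2 \rceil + 1$, and each subsequent real query raises it by at most $1$, so each epoch contains at least $\lceil k/2 \rceil - 1$ real queries before another centroid is required. To finish, I would show that for each centroid $c$, at least one real query $q$ in its epoch satisfies $d(q, t) \geq d(c, t)$; by monotonicity of $h$ this gives $h(d(c,t)) \leq h(d(q,t))$, so the total centroid cost is at most $1/(\lceil k/2 \rceil - 1)$ times the total real-query cost, yielding the claimed inflation factor.

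The main obstacle will be this per-epoch cost comparison: exhibiting, for each centroid $c$, a real query $q$ in the same epoch with $d(q,t) \geq d(c,t)$. In the unweighted analysis of Berendsohn and Kozma, the $\lceil k/2 \rceil - 1$ real queries per centroid suffice by counting alone. For general monotone $h$, one must instead exploit the specific geometry: the centroid $c$ lies on the path from some previously-queried boundary vertex to $t$, and a real query in the epoch that cuts off a sibling component of $t$'s component must be at least as far from $t$ as $c$. Formalizing this using the tree-metric structure of $F$ together with the order in which components are eliminated by the real queries will likely absorb most of the technical work.
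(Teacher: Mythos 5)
Your high-level plan (apply the Berendsohn--Kozma centroid-insertion procedure and control the extra cost via the $\Ceiling{k/2}-1$ lower bound on epoch length) matches the paper's use of the same construction, but the charging argument you sketch has a genuine gap, and the paper resolves it with a different charging direction.

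First, a quantitative problem: you propose to show that for each inserted centroid $c$ there is \emph{one} real query $q$ in its epoch with $d(q,t)\geq d(c,t)$. Even if that were true, it only yields ``total centroid cost $\leq$ total real-query cost,'' i.e.\ a factor of $2$, not $1+\frac{1}{\Ceiling{k/2}-1}$. To get the claimed factor you need to charge each centroid's cost against at least $\Ceiling{k/2}-1$ \emph{distinct} queries, each individually at least as costly, and then average.

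Second, and more fundamentally, the geometric claim you rely on does not hold. A real query in the epoch that ``cuts off a sibling component of $t$'s component'' lives \emph{inside} the feasible set after $c$, i.e.\ in the same component of $T\setminus\{c\}$ as $t$. Such a query can easily be closer to $t$ than $c$ is, so monotonicity of $h$ gives you nothing. The paper instead charges in the opposite temporal direction: the leaf centroid $v$ of the convex hull of $\Boundary{\Feas{u}}$ \emph{separates} $t$ from at least $\Ceiling{k/2}-1$ of the \emph{previously queried} boundary vertices $v'$ (this is Proposition~\ref{p:ancestor_sep}). Because $v$ lies on the path from each such $v'$ to $t$, we automatically get $d(v',t)>d(v,t)$, so $\Cost{d(v,t)}\leq\Cost{d(v',t)}$ for each of them, and averaging over the $|S_v|\geq\Ceiling{k/2}-1$ charged ancestors gives the per-promotion bound. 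That is the step your proposal is missing.

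Finally, since you would be charging against prior boundary queries rather than partitioning by time, ``epochs are disjoint'' no longer does the bookkeeping for you. The paper handles this by proving that the charge sets $S_{v_1},\ldots,S_{v_\ell}$ for the centroids on a given root-to-$t$ path are pairwise disjoint: if $x_i$ is an ancestor of $x_j$, then $\Feas[\STP]{x_j}$ lies entirely in a component of $T\setminus\{v_i\}$ not containing $S_{v_i}$, so $S_{v_i}\cap\Boundary{\Feas[\STP]{x_j}}=\emptyset$, while $S_{v_j}\subseteq\Boundary{\Feas[\STP]{x_j}}$. You would need this disjointness lemma (or an equivalent) to avoid charging a single original query more than once.
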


To prove the lemma, we use a conversion procedure due to Berendsohn and Kozma \cite{berendsohn2020stt} (Algorithm 1 of \cite{berendsohn2020stt}) which converts a general STT to a $k$-cut STT.
Their goal was also to reduce the search space for adaptive query strategies, at the cost of a small loss in solution quality.
However, they were interested only in one objective: the total number of queries; this corresponds to the special case when $\Cost{\DIST} = 1$ for all $\DIST \geq 1$.
Then, the maximum search cost associated with an STT \ST is exactly the height of \ST.
In \cite{berendsohn2020stt}, Berendsohn and Kozma show that for every $k \geq 3$, every STT \ST on \BT can be converted into a $k$-cut STT \STP on \BT such that
\begin{align*}
    \text{height}(\STP) & \leq \left( 1 + \frac{1}{\Ceiling{k/2} - 1} \right) \cdot \text{height}(\ST).
\end{align*}

The essence of their construction is an iterative transformation of \ST.
Whenever the feasible set \Feas{u} of some vertex $u \in \ST$ has boundary size at least $k$ (i.e., right on the boundary of violating the $k$-cut STT property), the algorithm chooses the next query judiciously to be one that separates the leaves of the boundary of the feasible set so that in the next step, no feasible set contains more than $k/2$ leaves.
This operation is achieved by ``promoting'' the node corresponding to the required query, i.e., moving it up in in the search tree.
The promotion may increase the search path for other nodes, but by at most 1, and on any given path, such a promotion can only happen roughly every $k/2$ steps.
This gives their desired bound.

We adapt these main ideas in our proof of Lemma~\ref{t:cost_inflation}, to work with more general non-decreasing cost functions.
Specifically, we will show that, even for the more general query costs, the same construction results in an identical constant multiplicative loss, by charging the increased cost due to the extra query to a subset of the queries made previously.

\subsection{$k$-cut Search Trees on Trees} \label{s:k-cut-appendix}

We introduce the following key definitions and basic results from \cite{berendsohn2020stt}.

\begin{definition}[Convex Hull, Definition 9 of \cite{bose2019stt}] \label{def:convex-hull}
  Let $\BT = (\BV,\BE)$ be a tree. The \textit{convex hull} of a set of vertices $S \subseteq \BV$, denoted by \ch[\BT]{S}, is the subtree of \BT induced by the union of all paths between pairs of vertices in $S$, i.e., the smallest connected subset of \BT containing all of $S$.
\end{definition}

\begin{definition}[Leaf Centroid, Definition 6 from \cite{berendsohn2020stt}]
  Let $\BT = (\BV,\BE)$ be a tree on $\NumBV \geq 3$ nodes, with $\ell$ leaves.
  A non-leaf node $v$ of $\BT$ is called a \emph{leaf centroid} if every connected component of $\BT \setminus \SET{v}$ has at most $\Floor{\frac{\ell}{2}} + 1$ leaves.
\end{definition}

A leaf centroid is always guaranteed to exist and can be found in linear time by starting from an arbitrary vertex $v$, and repeatedly moving along an edge in the direction of a component that has too many leaves.

The following basic facts about the boundaries of feasible sets of STTs were shown in \cite{berendsohn2020stt}.

\begin{proposition}[Lemma 2 of \cite{berendsohn2020stt}] \label{p:stt-boundary}
Let \ST be an STT with root $r$ on a tree $\BT = (\BV,\BE)$. Then:
\begin{enumerate}
   \item $\Boundary[\BT]{\Feas{u}} \subseteq \Set{\Lab{u'}}{u' \in \Path[\ST]{u}}$.
   \item Let $u \in \ST, u \neq r$ and $p$ the parent of $u$ in \ST. Then, $\SetCard{\Boundary[\BT]{\Feas{u}}} \leq \SetCard{\Boundary[\BT]{\Feas{p}}} + 1$.
\end{enumerate}
\end{proposition}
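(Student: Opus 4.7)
The plan is to prove both parts by first strengthening Part~1 to the statement that $\Feas{u}$ is exactly one \emph{connected component} of $\BT \setminus S_u$, where $S_u := \{\Lab{p'} : p' \text{ is a proper ancestor of } u \text{ in } \ST\}$. Granting this, Part~1 is immediate: any $v \notin \Feas{u}$ with a $\BT$-neighbor in $\Feas{u}$ must lie in the removed set $S_u$, and $S_u \subseteq \{\Lab{u'} : u' \in \Path[\ST]{u}\}$.

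First I would prove the strengthened Part~1 by induction on the depth $d$ of $u$ in $\ST$. The base case $d=0$ is immediate since $\Feas{r} = \BV$ and $\BT$ is connected. For the inductive step, let $p$ be the parent of $u$, so that $S_u = S_p \cup \{\Lab{p}\}$. By the induction hypothesis, $\Feas{p}$ is a connected component of $\BT \setminus S_p$. Unwinding Definition~\ref{def:search-trees} recursively, the subtree of $\ST$ rooted at $p$ is itself an STT on the subtree of $\BT$ induced by $\Feas{p}$, and the subtrees rooted at the children of $p$ correspond to the connected components of $\Feas{p} \setminus \{\Lab{p}\}$; one of these components is $\Feas{u}$. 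Since $\Feas{p}$ has no $\BT$-edges to the rest of $\BT \setminus S_p$, removing the internal vertex $\Lab{p}$ neither attaches $\Feas{p}$ to nor detaches it from anything outside, so the components of $\Feas{p} \setminus \{\Lab{p}\}$ coincide with those components of $\BT \setminus S_u$ that lie inside $\Feas{p}$. In particular, $\Feas{u}$ is a connected component of $\BT \setminus S_u$.

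For Part~2, I would prove the sharper inclusion $\Boundary{\Feas{u}} \subseteq \Boundary{\Feas{p}} \cup \{\Lab{p}\}$. Since $\Lab{p} \in \Feas{p}$ is not in $\Boundary{\Feas{p}}$, this immediately yields the $+1$ bound on cardinalities. Fix any $v \in \Boundary{\Feas{u}}$ with $v \neq \Lab{p}$, and let $w \in \Feas{u}$ be a $\BT$-neighbor of $v$. If $v \in \Feas{p}$, then both $v$ and $w$ belong to $\Feas{p} \setminus \{\Lab{p}\}$ and are adjacent in $\BT$, so they lie in the same connected component of that subgraph; by the proof of Part~1, those components are precisely the feasible sets of the children of $p$, and $\Feas{u}$ is one of them, forcing $v \in \Feas{u}$ and contradicting $v \in \Boundary{\Feas{u}}$. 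Hence $v \notin \Feas{p}$; combined with $w \in \Feas{u} \subseteq \Feas{p}$, this witnesses $v \in \Boundary{\Feas{p}}$.

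The main technical care needed is in the inductive step of Part~1: one must reconcile the recursive, local description of the STT (which speaks about connected components of $\Feas{p} \setminus \{\Lab{p}\}$ as a subgraph) with the global picture (connected components of $\BT \setminus S_u$). This reduces to observing that $\Feas{p}$ is connected and has no $\BT$-neighbors outside $S_p$, so deleting the single vertex $\Lab{p}$ cannot create edges to the rest of $\BT \setminus S_p$; once this compatibility is articulated, the case analysis for Part~2 is essentially bookkeeping.
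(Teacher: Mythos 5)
Your proof is correct. Note that the paper does not prove this proposition at all: it is imported as Lemma~2 of \cite{berendsohn2020stt}, so there is no in-paper argument to compare against. Your strengthened invariant --- that \Feas{u} is exactly one connected component of \BT with the labels of $u$'s proper ancestors removed --- is the natural characterization of feasible sets underlying the cited result, and both parts do follow cleanly from it: Part~1 because a vertex outside \Feas{u} but adjacent to it cannot lie in another component of $\BT \setminus S_u$, and Part~2 via the sharper inclusion $\Boundary[\BT]{\Feas{u}} \subseteq \Boundary[\BT]{\Feas{p}} \cup \SET{\Lab{p}}$. The two places that genuinely need care --- the base case $\Feas{r} = \BV$ (which uses the bijection between nodes of \ST and \BT) and the compatibility between components of $\Feas{p} \setminus \SET{\Lab{p}}$ taken inside $\BT[\Feas{p}]$ versus components of $\BT \setminus S_u$ --- are both handled correctly in your inductive step.
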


The first part of Proposition~\ref{p:stt-boundary} states that the boundary of the feasible set can only contain vertices that have been queried previously.
The second part states that the boundary size of the feasible set can increase by at most 1 upon querying a vertex.

The main reason for defining the convex hull earlier is that the leaves of the convex hull of the boundary of a feasible set must be a subset of the past queries made.

\begin{proposition}
\label{p:convex-hull}
    Let $\BT = (\BV,\BE)$ and $S \subseteq V$ such that the induced subgraph $\BT[S]$ is connected. Then, the set of leaves of $\ch[\BT]{\Boundary[\BT]{S}}$ is equal to $\Boundary[\BT]{S}$.
\end{proposition}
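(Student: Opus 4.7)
The plan is to prove the equality by showing the two inclusions separately. The easier direction establishes that every leaf of $\ch{\Boundary{S}}$ lies in $\Boundary{S}$; the harder direction shows that every vertex of $\Boundary{S}$ appears as a leaf of $\ch{\Boundary{S}}$.

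For the easier direction, I would argue that if $u$ is a leaf of $\ch{\Boundary{S}}$ but does not belong to $\Boundary{S}$, then by \cref{def:convex-hull} the vertex $u$ must lie strictly in the interior of some tree-path between two distinct vertices $a, b \in \Boundary{S}$, and thereby have at least two neighbors inside $\ch{\Boundary{S}}$, contradicting that it is a leaf. This is a general fact about convex hulls in trees and does not use the connectedness of $\BT[S]$.

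For the harder direction, I would fix an arbitrary $v \in \Boundary{S}$ and first observe that $v$ has a \emph{unique} neighbor $w$ in $S$: at least one such neighbor exists by the definition of boundary, and the existence of two distinct neighbors $w_1, w_2 \in S$ of $v$ would, together with the $\BT[S]$-path between $w_1$ and $w_2$ (which exists by connectedness of $\BT[S]$ and avoids $v$ since $v \notin S$), form a cycle, contradicting that $\BT$ is a tree. I would then show that for every other boundary vertex $v' \in \Boundary{S} \setminus \SET{v}$, the unique $\BT$-path from $v$ to $v'$ begins with the edge $(v, w)$. To see this, delete the edge $(v, w)$ from $\BT$, obtaining two components $C_v$ and $C_w$ containing $v$ and $w$, respectively. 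Since $v \notin S$, removing $(v, w)$ does not touch any edge of $\BT[S]$, so $S$ remains connected and lies entirely inside $C_w$. In particular the neighbor $w' \in S$ of $v'$ lies in $C_w$, and therefore so does $v'$, because the edge $(v', w')$ is distinct from $(v, w)$.

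It follows that every neighbor of $v$ in $\ch{\Boundary{S}}$ must equal $w$, so $v$ has degree at most one in the convex hull and is therefore a leaf; the degenerate case $\Abs{\Boundary{S}} = 1$ is handled by noting that the convex hull is the single vertex $\SET{v}$. Combining the two inclusions yields the proposition. I expect the uniqueness-of-neighbor-in-$S$ together with the routing-through-$w$ claim to be the most delicate steps, since both intertwine the tree property of $\BT$ with the connectedness of $\BT[S]$; beyond those, the argument is essentially bookkeeping.
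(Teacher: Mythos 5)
Your proof is correct and follows essentially the same approach as the paper: the leaves-into-boundary direction is the standard fact that a leaf of a convex hull in a tree lies in the hull-generating set, and the boundary-into-leaves direction hinges on the observation (used verbatim by the paper) that connectedness of $\BT[S]$ plus acyclicity of $\BT$ forces each boundary vertex to have a \emph{unique} neighbor in $S$. The one place where you go beyond the paper's terse write-up is the edge-deletion argument showing that every path from $v \in \Boundary{S}$ to another boundary vertex must leave through the unique $S$-neighbor $w$; the paper leaves this implicit when concluding that $v$ is a leaf of $\ch{\Boundary{S}}$, so your version is a slightly more careful rendering of the same idea rather than a different route.
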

\begin{proof}
  Notice that $S' := \ch[\BT]{\Boundary[\BT]{S}}$ must be a subtree of $S \cup \Boundary[\BT]{S}$.
  $S'$ also is the disjoint union of $S \cap S'$ and $\Boundary[\BT]{S}$.
  By definition of the convex hull, only vertices in $\Boundary[\BT]{S}$ can be leaves of $S'$, proving one inclusion direction.
  For the converse direction, because $\BT[S]$ is connected, no vertex in $\Boundary[\BT]{S}$ can be adjacent to more than one vertex of $S$ (otherwise, a cycle would be formed), so each vertex of $\Boundary[\BT]{S}$ is in fact a leaf of $S'$.
\end{proof}

\subsubsection{Rotations and Promotions in STTs}

A very useful analysis tool is the concept of rotations on an STT.
Such rotations, essentially identical to rotations on standard search trees, correspond to local changes in the query order of vertices, and their effects can be isolated and bounded.

\begin{definition}[Separatedness, Betweenness] \label{def:betweenness}
  Consider a tree $\BT = (\BV,\BE)$.
  \begin{enumerate}
  \item When the vertex $v$ is on the unique path from $u$ to $w$, we say that $v$ \emph{separates} $u$ and $w$ (or, $u$ from $w$, and vice versa).
  Equivalently, $v$ separates $u$ and $w$ if $u$ and $w$ are in distinct connected components of $\BT \setminus \SET{v}$.

  \item We say that a vertex $v \in \BV$ is \emph{between} vertices $u \neq w \in \BV$ if $v$ is in the connected component of $\BT \setminus \SET{u, w}$ which is adjacent to both $u$ and $w$.\footnote{Equivalently, $v$ is between $u$ and $w$ if there is a path from $v$ to $u$ that does not contain $w$, and a path from $v$ to $w$ that does not contain $u$. Another equivalent form is that $v$ lies in the same connected component of $\BT \setminus \SET{u, w}$ as some vertex separating $u$ and $w$.} We denote this relation with $v \in [u, w]$, writing $[u,w]$ for the set of all vertices between $u$ and $w$.
  \end{enumerate}
\end{definition}

\begin{definition}[Rotations on STTs] \label{def:rotations}
  Fix a tree $\BT = (\BV,\BE)$, and let \ST be an STT on \BT.
  Let $u \in \ST$ be a vertex of \ST, and $p$ its parent.
  The \emph{rotation} of $u$ towards $p$ in \ST results in the STT \STP with the following properties:
    \begin{enumerate}
        \item $u$ is the parent of $p$ in \STP.
        \item Every child $u'$ of $u$ with $\Lab{u'} \in [\Lab{u}, \Lab{p}]$ becomes a child of $p$ in \STP.
        \item Parent-child relations in \STP remain the same as in \ST for all other pairs of vertices.
    \end{enumerate}
\end{definition}

For the special case when \BT is a path, it is easy to check that this definition coincides with the usual notion of rotation in a search tree.
Rather than operating on STTs with single rotations, we will often be considering sequences of rotations that move a specific vertex $u$ to the position of one of its ancestors $x$.

\begin{definition}[Promotions on STTs] \label{def:promotions}
  Fix a tree $\BT = (\BV,\BE)$, and let \ST be an STT on \BT.
  Let $u \in \ST$ be a vertex of \ST, and $x$ an ancestor of $u$.
  Let $u_0 = u$, and $u_{i+1}$ the parent of $u_i$ in \ST, for all $i$, with $u_{\ell} = x$.
  The \emph{promotion} of $u$ to $x$ in \ST is obtained by rotating $u$ towards $u_1$, then towards $u_2$, and so on, and finally towards $u_{\ell} = x$.
\end{definition}

The effect of promotions on feasible sets is captured by the following proposition.

\begin{proposition}\label{p:promo_feasible_set}
  Let \BT be a tree, and \ST an STT on \BT.
  Let $u \in \ST$ be a (non-root) vertex, and $x$ an ancestor of $u$ in \ST.
  Let \STP be the STT obtained by promoting $u$ to $x$.
  Then, $\Feas[\STP]{u} = \Feas[\ST]{x}$.
\end{proposition}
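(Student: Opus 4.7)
The plan is to induct on the ancestral distance $\ell$ from $u$ to $x$, reducing the claim to the single-rotation case. By \cref{def:promotions}, the promotion of $u$ to $x$ produces a sequence $\ST = \ST_0, \ST_1, \ldots, \ST_\ell = \STP$, where $\ST_i$ is obtained by rotating $u$ toward its parent in $\ST_{i-1}$ (which will turn out to be $u_i$). I will establish by induction on $i$ that $\Feas[\ST_i]{u} = \Feas[\ST]{u_i}$; setting $i = \ell$ then yields $\Feas[\STP]{u} = \Feas[\ST]{x}$. Each inductive step will rely on two ingredients: (a) the one-step identity $\Feas[\ST_i]{u} = \Feas[\ST_{i-1}]{u_i}$ proved directly from \cref{def:rotations}, and (b) the stability $\Feas[\ST_{i-1}]{u_i} = \Feas[\ST]{u_i}$ of the vertex set of the subtree rooted at $u_i$ under all earlier rotations.

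Ingredient (a) is a careful unpacking of \cref{def:rotations} applied to $u$ and its current parent $p = u_i$. After the rotation, $u$ retains as children those of its original children whose labels lie outside $[\Lab{u},\Lab{p}]$ and additionally acquires $p$ as a child; $p$ in turn keeps its own other children together with the ``betweenness'' children promoted from under $u$. The subtrees below every reassigned vertex are themselves untouched, so summing their feasible sets partitions $\Feas[\ST_i]{u}$ into $\Feas[\ST_{i-1}]{u}$ together with $\Feas[\ST_{i-1}]{p} \setminus \Feas[\ST_{i-1}]{u}$, which is exactly $\Feas[\ST_{i-1}]{p} = \Feas[\ST_{i-1}]{u_i}$.

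Ingredient (b) rests on the locality of rotations: a single rotation affects only the joint subtree of the rotated pair, leaving the vertex set of every subtree rooted at a strict ancestor of that pair unchanged. Because after $j < i$ rotations $u$ sits in the slot originally occupied by $u_j$ and hence still strictly below $u_i$, each of the first $i-1$ rotations takes place inside the subtree of $u_i$, so the label set of that subtree is preserved, giving (b). Chaining (a) and (b) for $i = 1, \ldots, \ell$ closes the induction. The only real obstacle is the combinatorial bookkeeping in (a)---carefully identifying which children are reassigned to $p$ versus retained by $u$, and verifying the disjoint-union accounting; the rest of the argument is independent of the cost function and of any structural feature of \BT beyond the STT definition.
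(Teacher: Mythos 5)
Your proof is correct and takes essentially the same approach as the paper: both rest on the observation that the promotion rearranges only the internal structure of the subtree of \ST rooted at $x$, so its vertex set is preserved and ends up rooted at $u$, whence $\Feas[\STP]{u}=\Feas[\ST]{x}$ follows from \cref{def:feasible-node-set}. The paper simply asserts this vertex-set identity in one line, whereas you carry out the induction over the sequence of single rotations that makes it rigorous; the inductive bookkeeping in your ingredients (a) and (b) is sound.
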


\begin{proof}
  When $u$ is promoted to $x$, the set of nodes in the subtree of \STP rooted at $u$ is equal to the set of nodes in the subtree of \ST rooted at $x$.
  The claim now follows immediately from Definition~\ref{def:feasible-node-set}.
\end{proof}

We are particularly interested in the effect of promoting a leaf centroid, which is captured by the following proposition from \cite{berendsohn2020stt}, stating that promoting a leaf centroid essentially halves the size of the boundary of the feasible set:

\begin{proposition}[Lemma 5 of \cite{berendsohn2020stt}] \label{p:leaf-centroid-promotion}
  Let $u \in \ST$ be such that $\SetCard{\Boundary[\BT]{\Feas{u}}} \geq k$.
  Let $v \in \BV$ be a leaf centroid of \ch{\Boundary[\BT]{\Feas{u}}}, and $x \in \ST$ the vertex with $\Lab{x} = v$.
  Let \STP be the STT obtained from promoting $x$ to $u$ in \ST.
  Then, for every child $u'$ of $x$ in \STP, we have $\SetCard{\Boundary[\BT]{\Feas[\STP]{u'}}} \leq \Floor{\SetCard{\Boundary[\BT]{\Feas[\ST]{u}}}/2} + 1$.
\end{proposition}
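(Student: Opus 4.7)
The plan is to combine \cref{p:promo_feasible_set} (the structural effect of promotion on feasible sets) with the leaf-centroid property of $v$. Set $S := \Feas[\ST]{u}$, $\ell := \SetCard{\Boundary[\BT]{S}}$, and let $H := \ch[\BT]{\Boundary[\BT]{S}}$. By \cref{p:promo_feasible_set}, promoting $x$ to $u$ produces $\STP$ with $\Feas[\STP]{x} = S$; by \cref{def:search-trees} the children $u'$ of $x$ in $\STP$ correspond bijectively to the connected components of the induced subtree of $\BT$ on $S$ after removing $v$. Fix such a child and let $S' := \Feas[\STP]{u'}$ denote the corresponding component.

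Next I would decompose $\Boundary[\BT]{S'}$ into $\SET{v}$ and a subset of the old boundary. Since the induced subtree on $S$ is a subtree of $\BT$ and $S'$ is one of its components after removing $v$, any vertex lying in a different component is separated from $S'$ by $v$ in $\BT$ and therefore not in $\Boundary[\BT]{S'}$. Furthermore each $w \in \Boundary[\BT]{S}$ has a \emph{unique} neighbor $n(w)$ in $S$ (a second such neighbor would produce a cycle in $\BT$). Putting these together, $\Boundary[\BT]{S'} = \SET{v} \cup W$ disjointly, where $W := \Set{w \in \Boundary[\BT]{S}}{n(w) \in S'}$, so $\SetCard{\Boundary[\BT]{S'}} = 1 + \SetCard{W}$ and the claim reduces to proving $\SetCard{W} \leq \Floor{\ell/2}$.

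Finally, I would apply the leaf-centroid property to bound $\SetCard{W}$. By \cref{p:convex-hull}, the leaves of $H$ are exactly $\Boundary[\BT]{S}$, so $H$ has $\ell$ leaves, and by hypothesis every component of $H \setminus \SET{v}$ has at most $\Floor{\ell/2} + 1$ leaves. Because $H$ is a subtree of the tree $\BT$, any two distinct neighbors of $v$ in $H$ lie in distinct components of $H \setminus \SET{v}$. Using this with the connectedness of $S'$, I would argue that all elements of $W$ lie in a single component $C$ of $H \setminus \SET{v}$ — namely the one reached from $v$ by crossing into $S'$ — and moreover that $C$ contains exactly one ``new'' leaf produced by the deletion of $v$ (the unique neighbor of $v$ in $C$, which must lie in $S$ and hence is not itself in $\Boundary[\BT]{S} = W$). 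Subtracting this new leaf from the leaf-centroid bound then gives $\SetCard{W} \leq \Floor{\ell/2}$, and combining with the previous paragraph completes the proof.

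The main obstacle is the last step, namely the bookkeeping that links components of $H \setminus \SET{v}$ to components of $\BT[S] \setminus \SET{v}$ and that accounts for the ``new'' leaves arising from removing $v$. Establishing the at-most-one correspondence requires using that $H \subseteq \BT[S \cup \Boundary[\BT]{S}]$ is a tree whose leaves are $\Boundary[\BT]{S}$, and extracting the discount of $1$ from the leaf-centroid bound requires a careful analysis of the local structure of $H$ at $v$ — which is precisely the content of the original argument in \cite{berendsohn2020stt}.
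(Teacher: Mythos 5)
The paper does not prove this proposition; it imports it wholesale as Lemma~5 of \cite{berendsohn2020stt}, so there is no in-paper argument to compare against. Judging your proof on its own merits: the structural decomposition is correct and clean --- using \cref{p:promo_feasible_set} to get $\Feas[\STP]{x} = S$, identifying the children of $x$ with the components of $\BT[S] \setminus \SET{v}$, writing $\Boundary[\BT]{S'} = \SET{v} \cup W$ with $W = \Set{w \in \Boundary[\BT]{S}}{n(w) \in S'}$, and observing (via the path $w, n(w), \ldots, v$ contained in $H$) that $W$ lies in a single component $C$ of $H \setminus \SET{v}$ are all fine.

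The gap is exactly where you flag it: the claim that $C$ contains ``exactly one new leaf'' is not always true, and when it fails the discount of $1$ evaporates. Let $c$ be the unique $H$-neighbor of $v$ in $C$. Removing $v$ creates a new leaf in $C$ only when $c$ has degree exactly $2$ in $H$; if $c$ has degree $\geq 3$ in $H$, no new leaf appears, $W$ equals the full set of $H$-leaves in $C$, and the leaf-centroid bound as quoted in \cref{p:leaf-centroid-promotion}'s Definition only gives $\SetCard{W} \leq \Floor{\ell/2} + 1$, hence $\SetCard{\Boundary[\BT]{S'}} \leq \Floor{\ell/2} + 2$. Concretely, take $\BT$ with internal vertices $v, v', c, c_1, c_2$, edges $v\text{--}v'$, $v\text{--}c$, $c\text{--}c_1$, $c\text{--}c_2$, and leaves $w_1, w_2$ on $c_1$, $w_3$ on $c_2$, $w_4, w_5$ on $v'$. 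With $S = \SET{v, v', c, c_1, c_2}$ we have $\Boundary{S} = \SET{w_1, \ldots, w_5}$, $\ell = 5$, $H = \BT$, and one checks $v$ satisfies the definition of leaf centroid in the paper (the two components of $H \setminus \SET{v}$ have $3$ and $2$ leaves, both $\leq \Floor{5/2}+1 = 3$). Yet the child feasible set $S' = \SET{c, c_1, c_2}$ has $\Boundary{S'} = \SET{v, w_1, w_2, w_3}$ of size $4 > \Floor{5/2}+1$.

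This shows the step cannot be rescued as written; the discount argument is doing real work and fails. What actually makes the proposition true (and matches the usage in \cite{berendsohn2020stt}) is a slightly stronger leaf-centroid guarantee than the one transcribed here: each component of $H \setminus \SET{v}$ should contain at most $\Floor{\ell/2}$ leaves \emph{of $H$} (not $\Floor{\ell/2}+1$ leaves of the component, which allows the extra one to be an original leaf). With that formulation, your argument closes immediately and without any discount: $W \subseteq (\text{leaves of } H) \cap C$, so $\SetCard{W} \leq \Floor{\ell/2}$, and $\SetCard{\Boundary[\BT]{S'}} = 1 + \SetCard{W} \leq \Floor{\ell/2}+1$. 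So the fix is to sharpen the centroid property you invoke, not to account for new leaves at $c$.
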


\subsection{Proof of Lemma~\ref{t:cost_inflation}} \label{ss:cost_increase}

We first analyze how a single promotion affects the search tree cost.
  
\begin{proposition} \label{p:promotion_cost}
  Let \BT be a tree, and \ST an STT on \BT.
  Let $u \in \ST$ be a (non-root) vertex and $x$ an ancestor of $u$ in \ST.
  Let $u_0 = u$, and $u_{i+1}$ the parent of $u_i$ in \ST, for all $i$, with $u_{\ell} = x$.
  Let \STP be obtained by promoting $u$ to $x$.
  Then, the new costs associated with searching for a target \TV in \STP satisfy the following:
  \begin{itemize}
     \item For all target nodes $\TV \in \bigcup_{j=1}^{\ell} (\Feas[\STP]{u_j} \setminus [\Lab{u}, \Lab{u_j}])$, the cost increases by \GenCost{\Lab{u}}{\TV}:
          $\TreeCost{\STP}{\TV} = \TreeCost{\ST}{\TV} + \GenCost{\Lab{u}}{\TV}$.
     \item For all other target nodes, the cost stays the same or decreases:
         $\TreeCost{\STP}{\TV} \leq \TreeCost{\ST}{\TV}$.
  \end{itemize}
\end{proposition}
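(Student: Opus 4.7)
The plan is to decompose the promotion into $\ell$ successive single-rotations producing intermediate STTs $\mathcal{T}^{(0)} = \ST, \mathcal{T}^{(1)}, \ldots, \mathcal{T}^{(\ell)} = \STP$. Every rotation is localized to the subtree rooted at $x$ in \ST, so for any target $\TV \notin V_x := \Feas[\ST]{x}$ the root-to-leaf path is identical in both trees, and $\TreeCost{\STP}{\TV} = \TreeCost{\ST}{\TV}$ immediately.

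I would first handle the single-rotation case ($\ell=1$) by a four-way case analysis. Rotating $u$ above its parent $p$: (i) if $\TV \notin \Feas[\ST]{p}$, the path is unaffected; (ii) if $\TV \in \Feas[\ST]{p} \setminus \Feas[\ST]{u}$, the query $\Lab{u}$ is inserted as a new ancestor on the path, and the cost goes up by exactly $\GenCost{\Lab{u}}{\TV}$; (iii) if $\TV \in \Feas[\ST]{u} \cap [\Lab{u},\Lab{p}]$, the queries $\Lab{u}$ and $\Lab{p}$ swap order but the query set is preserved, so cost is unchanged; (iv) if $\TV \in \Feas[\ST]{u} \setminus [\Lab{u},\Lab{p}]$, the query $\Lab{p}$ is dropped (after querying $\Lab{u}$ we descend directly to the subtree containing $\TV$), so cost decreases by $\GenCost{\Lab{p}}{\TV}$. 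A short calculation shows that $\Feas[\STP]{p} \setminus [\Lab{u},\Lab{p}] = \Feas[\ST]{p} \setminus \Feas[\ST]{u}$, matching the proposition's formulation.

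The inductive step rests on two identities about intermediate trees: $\Feas[\mathcal{T}^{(j-1)}]{u} = \Feas[\ST]{u_{j-1}}$ (from \cref{p:promo_feasible_set} applied to the partial promotion of $u$ to $u_{j-1}$), and $\Feas[\mathcal{T}^{(j-1)}]{u_j} = \Feas[\ST]{u_j}$ (because rotations $1,\ldots,j-1$ all take place inside the subtree rooted at $u_{j-1}$, which sits strictly inside the subtree of $u_j$). Applying the single-rotation analysis at step $j$, the targets whose cost goes up by $\GenCost{\Lab{u}}{\TV}$ at this step are exactly those in $\Feas[\ST]{u_j} \setminus \Feas[\ST]{u_{j-1}}$. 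Since $\Feas[\ST]{u_0} \subsetneq \Feas[\ST]{u_1} \subsetneq \cdots \subsetneq \Feas[\ST]{u_\ell}$ is a strict chain, these increase sets are pairwise disjoint and their union is $V_x \setminus \Feas[\ST]{u}$. Hence each $\TV$ in $V_x \setminus \Feas[\ST]{u}$ receives its $\GenCost{\Lab{u}}{\TV}$ bump at exactly one rotation, while targets in $\Feas[\ST]{u}$ land in no increase set and so never incur a cost increase (only possible decreases), giving the ``$\leq$'' guarantee.

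The main technical obstacle is achieving \emph{exact} equality rather than just an upper bound: for $\TV$ in the increase set at rotation $j$, I must verify that each subsequent rotation $j' > j$ puts it in case (iii) (swap, no change) rather than case (iv) (drop, decrease). I would prove this via the invariant that in $\mathcal{T}^{(j'-1)}$, the child of $u$ whose subtree contains $\TV$ has label in $[\Lab{u},\Lab{u_{j'}}]$, using the fact that the ``direction-of-$u_{j'}$'' child of $u$ is carried by the same labeled vertex throughout the promotion. Finally, to match the proposition's statement, I would establish the identity $\Feas[\STP]{u_j} \setminus [\Lab{u},\Lab{u_j}] = \Feas[\ST]{u_j} \setminus \Feas[\ST]{u_{j-1}}$, which follows from $\Feas[\STP]{u_j} = (\Feas[\ST]{u_j} \setminus \Feas[\ST]{u_{j-1}}) \cup [\Lab{u},\Lab{u_j}]$ together with the containment $[\Lab{u},\Lab{u_j}] \subseteq \Feas[\ST]{u_{j-1}}$ (the open path between $\Lab{u}$ and $\Lab{u_j}$ in $\BT$ lies entirely in the component of $\BT \setminus \{\Lab{u_j}\}$ containing $\Lab{u}$, which is $\Feas[\ST]{u_{j-1}}$).
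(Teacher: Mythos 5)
Your proof takes a genuinely different route from the paper. The paper argues directly about the final tree $\STP$: it observes that $u$ is inserted onto the root-to-$u_j$ path for each $j$, then immediately classifies targets by whether $\Lab{u}$ was already on their path in $\ST$; it does not decompose the promotion into single rotations. Your approach instead decomposes the promotion into its constituent rotations and tracks the cost delta step by step. This is arguably more rigorous --- in particular, it makes explicit the ``exact equality'' obstacle (that subsequent rotations must not drop further queries for targets that already received the $\GenCost{\Lab{u}}{\TV}$ bump), which the paper's two-line proof silently glosses over --- and the chain of identities $\Feas[\mathcal{T}^{(j-1)}]{u} = \Feas[\ST]{u_{j-1}}$, $\Feas[\mathcal{T}^{(j-1)}]{u_j} = \Feas[\ST]{u_j}$ is a clean inductive scaffolding.

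However, there is a concrete error in the last paragraph. The claims $[\Lab{u},\Lab{u_j}] \subseteq \Feas[\ST]{u_{j-1}}$ and $\Feas[\STP]{u_j} = (\Feas[\ST]{u_j} \setminus \Feas[\ST]{u_{j-1}}) \cup [\Lab{u},\Lab{u_j}]$ are both false in general. Your justification asserts that the component of $\BT \setminus \{\Lab{u_j}\}$ containing $\Lab{u}$ is $\Feas[\ST]{u_{j-1}}$, but that component is generally a strict superset of $\Feas[\ST]{u_{j-1}}$: it is the component of $\BT \setminus \{\Lab{u_j}\}$, not of $\Feas[\ST]{u_j} \setminus \{\Lab{u_j}\}$. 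Moreover, $[\Lab{u},\Lab{u_j}]$ is the entire between-component of $\BT \setminus \{\Lab{u},\Lab{u_j}\}$, not just the open path, and it can contain vertices outside $\Feas[\ST]{u_j}$ altogether (e.g., side branches hanging off the $\Lab{u}$--$\Lab{u_j}$ path whose root in $\BT$ is the label of a higher ancestor $u_{j'}$, $j'>j$). Concretely, take $\BT$ to be the path $1$--$2$--$3$--$4$--$5$ with a pendant leaf $6$ at $3$; with $\ST$ queried in order $6,5,3,1,2$ (and $4$ off $3$), and $u$ the node labeled $1$, one gets $[\Lab{u},\Lab{u_2}] = [1,5] = \{2,3,4,6\}$ but $\Feas[\ST]{u_1} = \{1,2,3,4\}$, so $6$ violates the claimed containment. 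The final identity $\Feas[\STP]{u_j} \setminus [\Lab{u},\Lab{u_j}] = \Feas[\ST]{u_j} \setminus \Feas[\ST]{u_{j-1}}$ does appear to hold, but it needs a different derivation: one must intersect $[\Lab{u},\Lab{u_j}]$ with $\Feas[\STP]{u_j}$ before invoking a containment, and argue separately that $\Feas[\STP]{u_j} \subseteq \Feas[\ST]{u_j}$.
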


\begin{proof}
  Each node $u_1, \ldots, u_{\ell}$ has $u$ added in its root-to-node path \Path[\ST]{u_i}.
  The cost of the query to \Lab{u} will be incurred whenever the target \TV is in \Feas[\STP]{u_i}.
  For the nodes in $[\Lab{u}, \Lab{u_i}]$, the cost had already been incurred previously, as they had been in \Feas[\ST]{u}; for the other nodes, this causes the claimed cost increase.
  For all other target nodes \TV, the path \Path[\ST]{x} to the node $x$ with $\Lab{x} = \TV$ stays the same in \STP, so the query sequence incurs the same cost.
\end{proof}

Next, we analyze the effect of the promotion of one leaf centroid when the feasible set associated with a node has large boundary.

\begin{proposition}\label{p:ancestor_sep}
  Let $\BT = (\BV, \BE)$ be a tree, and \ST an STT on \BT.
  Let $x \in \ST$ be such that $\SetCard{\Boundary[\BT]{\Feas[\ST]{x}}} = k$ for some $k \geq 3$.
  Let $v$ be a leaf centroid of \Feas[\ST]{x}, and $u \in \ST$ the descendant of $x$ in \ST such that $\Lab{u} = v$.
  Let \STP be the STT on \BT obtained from promoting $u$ to $x$.
  Then, for every target vertex $\TV \in \Feas[\STP]{u} \setminus \SET{v}$, we have that $v$ separates \TV from at least $\Ceiling{k/2}-1$ leaves of $\ch[\BT]{\Boundary[\BT]{\Feas[\STP]{u}}}$, or equivalently, from \Lab{u'} for at least $\Ceiling{k/2}-1$ ancestors $u'$ of $u$ in \STP.
\end{proposition}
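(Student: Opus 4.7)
The strategy is to translate the leaf-centroid property of $v$ into a bound on the number of boundary vertices that can lie in the same connected component of $\BT \setminus \SET{v}$ as the target \TV, and then identify these boundary vertices with the required ancestors of $u$ in \STP.

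\textbf{Step 1 (Relating the two feasible sets).} By Proposition~\ref{p:promo_feasible_set}, the promotion of $u$ to $x$ yields $\Feas[\STP]{u} = \Feas[\ST]{x}$. Writing $B := \Boundary[\BT]{\Feas[\STP]{u}} = \Boundary[\BT]{\Feas[\ST]{x}}$, this set has size $k$. Since feasible sets of STTs are connected subtrees of \BT, Proposition~\ref{p:convex-hull} applies, so $B$ is exactly the leaf set of $\ch[\BT]{B}$. Thus ``separating \TV from a leaf of $\ch[\BT]{B}$'' is the same as ``separating \TV from an element of $B$.''

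\textbf{Step 2 (Leaf-centroid bound inside the convex hull).} The vertex $v$ is a leaf centroid of $\ch[\BT]{B}$, a tree with $k$ leaves, so every connected component of $\ch[\BT]{B} \setminus \SET{v}$ contains at most $\Floor{k/2} + 1$ leaves, i.e.\ at most $\Floor{k/2} + 1$ elements of $B$.

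\textbf{Step 3 (Matching components in \BT and in $\ch[\BT]{B}$).} Fix $\TV \in \Feas[\STP]{u} \setminus \SET{v}$ and let $C$ be the component of $\BT \setminus \SET{v}$ containing \TV. The key observation is that $C \cap \ch[\BT]{B}$ is either empty or a single component of $\ch[\BT]{B} \setminus \SET{v}$: indeed, $\ch[\BT]{B}$ is a subtree of \BT containing $v$, and two distinct neighbors of $v$ in $\ch[\BT]{B}$ necessarily lie in distinct components of $\BT \setminus \SET{v}$, since any path in \BT between them must pass through $v$. In both cases, at most $\Floor{k/2} + 1$ elements of $B$ lie in $C$ (zero in the former, the leaf-centroid bound in the latter). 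Hence at least
\[
   k - (\Floor{k/2} + 1) \;=\; \Ceiling{k/2} - 1
\]
elements of $B$ lie outside $C$, i.e.\ are separated from \TV by $v$.

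\textbf{Step 4 (Identifying boundary vertices with ancestors of $u$).} By Proposition~\ref{p:stt-boundary}(1) applied in \STP, every vertex of $B = \Boundary[\BT]{\Feas[\STP]{u}}$ is the label of some node on the root-to-$u$ path in \STP. Since $v = \Lab{u} \in \Feas[\STP]{u}$ is disjoint from $B$, each element of $B$ is in fact the label of a proper ancestor of $u$ in \STP, and distinct elements of $B$ correspond to distinct ancestors because \LAB is a bijection. Thus the $\Ceiling{k/2} - 1$ elements of $B$ separated from \TV by $v$ yield $\Ceiling{k/2} - 1$ distinct ancestors $u'$ of $u$ in \STP with $v$ separating \TV from $\Lab{u'}$, which is the desired equivalent formulation.

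The main technical point requiring care is Step~3, i.e.\ confirming that components of $\ch[\BT]{B} \setminus \SET{v}$ embed faithfully as subsets of components of $\BT \setminus \SET{v}$; everything else reduces to bookkeeping with the previously established propositions.
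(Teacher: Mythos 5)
Your proof is correct and follows essentially the same approach as the paper: identify the boundary of $\Feas[\STP]{u}$ with labels of ancestors of $u$ via Propositions~\ref{p:promo_feasible_set}, \ref{p:convex-hull}, and \ref{p:stt-boundary}, then apply the leaf-centroid bound on the components of $\ch[\BT]{\Boundary[\BT]{\Feas[\STP]{u}}}\setminus\SET{v}$. Your Step~3, which explicitly verifies that the component of $\BT\setminus\SET{v}$ containing \TV can meet the convex hull in at most one of its components (covering also the case $\TV$ outside the convex hull), is spelled out a bit more carefully than the paper's own proof, but the underlying argument is the same.
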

\begin{proof}
  Let $S = \ch[\BT]{\Boundary[\BT]{\Feas[\STP]{u}}}$ be the convex hull of the boundary of $\Feas[\STP]{u}$ in $\STP$.
  Since promoting $u$ to $x$ in \ST results in the same feasible set, i.e. $\Feas[\STP]{u} = \Feas[\ST]{x}$, the number of leaves of $S$ is $k$.
  By Proposition~\ref{p:convex-hull}, the leaves of $S$ are all in \Boundary[\BT]{\Feas[\STP]{u}}; thus, by Proposition~\ref{p:stt-boundary}, the leaves must be the labels of ancestors of $u$ in \STP.
  
  By definition of leaf centroids, every connected component of $S \setminus \SET{v}$ contains at most $\Floor{k / 2} + 1$ leaves of $\BT[S]$.
  Thus, every target \TV in $S$ is separated by $v$ from at least $k - (\Floor{k / 2}+1) = \Ceiling{k / 2}-1$ leaves of $\BT[S]$.
\end{proof}

Proposition~\ref{p:ancestor_sep} states that if we have a feasible set $S$ separated from its complement in \BT by exactly $k$ edges, then querying a leaf centroid of $S$ instead of the query $x$ that \ST would have made separates at least $\Ceiling{\frac{k}{2}} - 1$ of the previously queried vertices from every vertex in the feasible set.
This enables a charging type argument at the heart of the proof of Lemma~\ref{t:cost_inflation}, as cost increases (according to Proposition~\ref{p:promotion_cost}) associated with promoting a leaf centroid can be averaged out and charged to at least $\Ceiling{k/2}-1$ of the prior queries.
This argument exploits the monotonicity of the cost function \COST.

\begin{extraproof}{Lemma~\ref{t:cost_inflation}}
    Fix $k \geq 3$. We describe an iterative procedure, introduced by Bose et al.~in \cite{bose2019stt}, and generalized to $k$-cut search trees in Algorithm 1 of \cite{berendsohn2020stt}, for explicitly obtaining the $k$-cut STT \STP from an STT \ST.

    In each iteration, the procedure promotes a leaf centroid of a node $u$ whose boundary is too large.
    Specifically, the procedure starts with $\STP = \ST$.
    Consider an iteration in which there exists at least one node $u \in \STP$ with $\SetCard{\Boundary[\BT]{\Feas[\STP]{u}}} \geq k$.
    Let $u$ be such a node with the property that all ancestors $x$ of $u$ have $\SetCard{\Boundary[\BT]{\Feas[\STP]{x}}} < k$.
    By Proposition \ref{p:stt-boundary}, $\SetCard{\Boundary[\BT]{\Feas[\STP]{u}}} = k$.
    Let $v \in \BV$ be a leaf centroid of $\ch{\Boundary[\BT]{\Feas[\STP]{u}}}$, and $x$ the descendant of $u$ with $\Lab{x} = v$.
    The procedure now promotes $x$ to $u$. Such a step is shown pictorially in \cref{f:leaf-centroid-promotion}.  

    \begin{figure}[htb]
    \begin{center}
    \begin{tikzpicture}
    \begin{scope}[every node/.style={circle,thick,draw}]
        \node (A) at (0, 0) {A};
        \node (B) at (1.414, 0) {B};
        \node (C) at (2.828, 0) {C};
        \node (D) at (0, 1.414) {D};
        \node (E) at (0, 2.828) {E};
        \node (F) at (-1.414, 0) {F};
        \node (G) at (-2.828, 0) {G};
        \node (H) at (0, -1.414) {H};
        \node (I) at (0, -2.828) {I};
        \node (J) at (0, -4.432) {J};
    \end{scope}
    
    \begin{scope}[>={Stealth[black]},
                  every node/.style={fill=white,circle},
                  every edge/.style={draw=black,very thick}]
        \path (A) edge (B);
        \path (B) edge (C);
        \path (A) edge (D);
        \path (D) edge (E);
        \path (A) edge (F);
        \path (F) edge (G);
        \path (A) edge (H);
        \path (H) edge (I);
        \path (I) edge (J);
    \end{scope}
    \end{tikzpicture}\vspace{1.25cm}\\
    \begin{tikzpicture}
    \begin{scope}[every node/.style={circle,thick,draw}]
        \node (B) at (0, 0) {B};
        \node (C) at (-1, -1) {C};
        \node (D) at (1, -1) {D};
        \node (E) at (0, -2) {E};
        \node (F) at (2, -2) {F};
        \node (G) at (1, -3) {G};
        \node (I) at (3, -3) {I};
        \node (J) at (2, -4) {J};
        \node (A) at (4, -4) {A};
        \node (H) at (4, -5.414) {H};
    \end{scope}
    
    \begin{scope}[>={Stealth[black]},
                  every node/.style={fill=white,circle},
                  every edge/.style={draw=black,very thick}]
        \path (B) edge (C);
        \path (B) edge (D);
        \path (D) edge (E);
        \path (D) edge (F);
        \path (F) edge (G);
        \path (F) edge (I);
        \path (I) edge (J);
        \path (I) edge (A);
        \path (A) edge (H);
        \path [dotted, ->] (A) edge [bend right = 70] (I);
    \end{scope}
    \end{tikzpicture}
    \begin{tikzpicture}
    \begin{scope}[every node/.style={circle,thick,draw}]
        \node (B) at (0, 0) {B};
        \node (C) at (-1, -1) {C};
        \node (D) at (1, -1) {D};
        \node (E) at (0, -2) {E};
        \node (F) at (2, -2) {F};
        \node (G) at (1, -3) {G};
        \node (A) at (3, -3) {A};
        \node (J) at (5, -5) {J};
        \node (I) at (4, -4) {I};
        \node (H) at (3, -5) {H};
    \end{scope}
    
    \begin{scope}[>={Stealth[black]},
                  every node/.style={fill=white,circle},
                  every edge/.style={draw=black,very thick}]
        \path (B) edge (C);
        \path (B) edge (D);
        \path (D) edge (E);
        \path (D) edge (F);
        \path (F) edge (G);
        \path (I) edge (J);
        \path (I) edge (H);
        \path (I) edge (A);
        \path (A) edge (F);
    \end{scope}
    \end{tikzpicture}
    \vspace{0.4cm}
    \caption{Leaf centroid promotion for converting to a 3-cut search tree. \\
      (Top): The base tree $\BT$.
      (Left): The original search tree $\ST$, showing the labels (queried nodes) for each node of \ST.
      The vertex with label $I$ has associated feasible set $\Feas{I} = \SET{A, H, I, J}$, which has boundary size 3.
      Although this is not immediately an issue, children of $I$ (such as $A$) might have feasible sets exceeding the boundary size.
      Thus, the procedure promotes a leaf centroid of $\ch{\Boundary{\Feas{I}}} = \ch{\SET{B, D, F}} = \SET{A, B, D, F, H, I, J}$ to $I$.
      Since $A$ is a leaf centroid (in fact, the only one in this example), the procedure promotes $A$ to $I$.
      Note that a leaf centroid is not necessarily a child of the vertex it is promoting to.
      The promotion only increases the cost of $J$ by $\Cost{\Dist[\BT]{A}{J}} = \Cost{3}$.
      In the proof of Theorem~\ref{t:cost_inflation}, we use the fact that $A$ separates $J$ from $B,D$, and $F$ --- which were already on the search path $\Path[\ST]{J}$ --- to charge the cost increase of \Cost{3} averaged over the cost of $\Ceiling{k/2} - 1= 1$ vertex from $\SET{B,D,F}$.
      Notice that all of $\SET{B,D,F}$ are further from $J$ than $A$ due to separation.
      (Right): The search tree \STP obtained from promoting $A$ to $I$.
      While previously, \Feas[\ST]{A} had boundary size 4, the promotion decreased the boundary size of \Feas[\STP]{A} to 1. \label{f:leaf-centroid-promotion}}
    \end{center}
    \end{figure} 

    We now analyze the effect of these promotions on the cost of searching for $\TV \in \STP$; fix a \TV for the rest of the proof.
    By Proposition~\ref{p:promotion_cost}, in the worst case, promoting $x$ to $u$ for $\TV \in \Feas[\ST]{u}$ increases the cost of searching for \TV by $\Cost{\Dist[\BT]{v}{\TV}}$.
    Proposition~\ref{p:ancestor_sep} guarantees that a subset of the leaves of $\ch{\Boundary[\BT]{\Feas[\ST]{u}}}$ of size at least $\Ceiling{k/2} - 1$ is separated from \TV by $v$.
    In other words, at least $\Ceiling{k/2} - 1$ of the nodes \Lab{u'} for ancestors $u'$ of $u$ in \STP are separated from \TV by $v$.
    We denote this subset by $S_v$, with $\SetCard{S_v} \geq \Ceiling{k/2} - 1$. 

    Because $v$ separates \TV from each $v' \in S_v$, $v$ lies on the path from \TV to $v'$, implying that $\Dist[\BT]{v'}{\TV} > \Dist[\BT]{v}{\TV}$.
    Since \COST is non-decreasing, the cost increase associated with the promotion of $x$ is upper-bounded by $\Cost{\Dist[\BT]{v'}{\TV}}$ for each $v' \in S_v$.
    Since this upper bound holds for all $v' \in S_v$, we can average the upper bound over $S_v$, implying that the cost increase of promoting $x$ to $u$ is bounded as follows:
    \begin{align*}
      \Cost{\Dist[\BT]{v}{\TV}} & \leq  \frac{1}{\SetCard{S_v}} \cdot \sum_{v' \in S_v} \Cost{\Dist[\BT]{v'}{\TV}}
                         \; \leq \; \frac{1}{\Ceiling{k/2} -1} \cdot \sum_{v' \in S_v} \Cost{\Dist[\BT]{v'}{\TV}}.
    \end{align*}

    Now, among all vertices of \ST that were promoted over the course of the procedure, let $x_1, x_2, \dots, x_{\ell}$ be the ones that ended up on the root-to-\TV path in \STP, in the order in which they were promoted.
    Let $v_i = \Lab{x_i}$ be the corresponding leaf centroids.
    We argue that, crucially, the corresponding sets $S_{v_1}, \ldots, S_{v_{\ell}}$ are disjoint.

    By the processing order in the procedure, whenever $i < j$, node $x_i$ is an ancestor of $x_j$.
    Because $\TV \in \Feas[\STP]{x_j} \subseteq \Feas[\STP]{x_i}$, all of \Feas[\STP]{x_j} lies in the same connected component of $\BT \setminus \SET{v_i}$.
    By definition of $S_{v_i}$, the target \TV is separated from $S_{v_i}$ by $v_i$; as a result, so is all of \Feas[\STP]{x_j}.
    Thus, all nodes of $S_{v_i}$ are at distance at least 2 from any node in \Feas[\STP]{x_j}; in particular, $S_{v_i} \cap \Boundary{\Feas[\STP]{x_j}} = \emptyset$.
    Because $S_{v_j} \subseteq \Boundary{\Feas[\STP]{x_j}}$, this implies the claimed disjointness.

    Further, by Proposition~\ref{p:stt-boundary}, the $S_{v_j}$ are all contained in $\Set{\Lab{u}}{u \in \Path[\ST]{\TV}}$.

    Applying these facts in the second inequality below, we can then bound the total cost increase for searching for \TV by
    \begin{align*}
      \TreeCost{\STP}{\TV} - \TreeCost{\ST}{\TV}
        & \leq \sum_{i = 1}^{\ell} \frac{1}{\Ceiling{k/2} - 1} \cdot \sum_{v' \in S_{v_i}} \Dist[\BT]{v'}{\TV} 
        \; = \; \frac{1}{\Ceiling{k/2}-1} \cdot \sum_{i=1}^{\ell} \sum_{v' \in S_{v_i}} \Dist[\BT]{v'}{\TV} \\
        & \leq \frac{1}{\Ceiling{k/2} - 1} \cdot \sum_{v' \in \Set{\Lab{u}}{u \in \Path[\ST]{\TV}}} \Dist[\BT]{v'}{\TV} 
        \; = \; \frac{1}{\Ceiling{k/2} - 1} \cdot \TreeCost{\ST}{\TV}.
    \end{align*}
    
    Thus, $\TreeCost{\STP}{\TV} \leq (1+\frac{1}{\Ceiling{k/2} -1}) \cdot \TreeCost{\ST}{\TV}$ for all $\TV \in \BV$.
\end{extraproof}

\subsection{Encoding Overhead Cost Functions on Trees}\label{ss:tree_function_encoding}
The second challenge occurs because the ``summaries'' of sets of queries become more complex.
Previously, when evaluating the cost of a target node \TV, two vectors of summary statistics --- for past queries on the left and past queries on the right --- were sufficient.
Importantly, these statistics never had to be merged.
In the case of searching on a tree, consider the case when a query \QUERY is made to a node with multiple previously queried subtrees.
To avoid the search space getting too large, all the queries from these different subtrees should now be summarized into one statistic ``associated with'' \QUERY.
Their own previous representative nodes might be at different distances from \QUERY.
We show how to produce such an encoding via sums of powers of distances from sets $Q$ of queries to ``interface'' nodes which separate $Q$ from the remaining feasible set.

Let $S \subseteq V$ be some feasible set (a subtree of \BT), and let $Q \subseteq \BV$ be the set of nodes queried previously (and thus resulting in the feasible set $S$).
Let $\IFV[1], \ldots, \IFV[s] \in S$ be the vertices of $S$ that are adjacent to some vertex in $\BV \setminus S$.
We refer to these vertices as the \emph{interface vertices} of $S$.
For each $i = 1, \ldots, s$, let $\IFS[i] \subseteq Q$ be the set of queries \QUERY that are not in the same connected component with $S$ in $\BT \setminus \SET{\IFV[i]}$, i.e., the queries whose paths to $S$ go through the interface vertex \IFV[i].
Notice that the \IFS[i] form a partition of the set of all queries $Q$.

For a set $Q$ of queries and a node $v$ (we will mostly use this notation for matching pairs $\IFS = \IFS[i], \IFV = \IFV[i]$ of a query set and its corresponding interface node) and some $m \in \SET{0, \ldots, \DB}$, we define $$\EnCo{m}{Q,v}  = \sum_{q \in Q} \Dist[\BT]{q}{v}^m.$$

Note first that the \EnCo{m}{\IFS,\IFV} are linear, in the sense that if $Q, Q'$ are disjoint, then
$\EnCo{m}{Q \cup Q',v} = \EnCo{m}{Q,v} + \EnCo{m}{Q',v}$.
Second, observe that if $Q$ is the set of all queries, then the total cost of $Q$ for a target \TV can be easily computed as
\begin{align*}
  \sum_{q \in Q} \GenCost[\BT]{q}{\TV}
  & = \sum_{q \in Q} \Cost{\Dist[\BT]{q}{\TV}} 
  \; = \; \sum_{q \in Q} \sum_{m=0}^p \PolCo{m} \cdot \Dist[\BT]{q}{\TV}^m  \nonumber \\
  \; & = \; \sum_{m=0}^p \PolCo{m} \cdot \sum_{q \in Q} \Dist[\BT]{q}{\TV}^m 
  \; = \; \sum_{m=0}^p \PolCo{m} \cdot \EnCo{m}{Q,\TV}. 
\end{align*}
This also implies that for two sets $Q, Q'$, whenever $\EnCo{m}{Q,\TV} = \EnCo{m}{Q',\TV}$ for all $m$, the cost of \TV being the target under these two query sets is the same.
In this sense, the \EnCo{m}{Q,\TV} again serve as sketches, and capture all relevant information about $Q$, as far as costs are concerned.

Now, consider a feasible set $S$, an interface vertex \IFV[i] of $S$, and corresponding set \IFS[i].
Let $v \in S$ be a vertex, at distance $r = \Dist{v}{\IFV[i]}$ from \IFV[i].
For all $q \in Q$, the path from $q$ to $v$ must go through \IFV[i], so that $\Dist{q}{v} = \Dist{q}{\IFV[i]} + r$.
Therefore, we obtain that
\begin{align}
  \EnCo{m}{\IFS[i],v} & = \sum_{q \in \IFS[i]} \Dist[\BT]{q}{v}^m  
  \; = \; \sum_{q \in \IFS[i]} (\Dist[\BT]{q}{\IFV[i]} + r)^m 
  \; = \;\sum_{q \in \IFS[i]} \sum_{j=0}^m \binom{m}{j} \Dist[\BT]{q}{\IFV[i]}^{m-j} r^j \nonumber \\
  & = \sum_{j=0}^m \binom{m}{j} r^j \cdot \sum_{q \in \IFS[i]}  \Dist[\BT]{q}{\IFV[i]}^{m-j} 
  \; = \; \sum_{j=0}^m \binom{m}{j} \cdot r^j \cdot \EnCo{m-j}{\IFS[i],\IFV[i]}. \label{eqn:tree-cost-shift}
\end{align}

This implies that the coefficients \EnCo{m}{\IFS[i],v} for a vertex $v$ can be computed from those for \IFV[i], so long as \IFV[i] lies on the path from $v$ to $q$ for all $q \in \IFS[i]$.
For a given (connected) set $S$ of feasible vertices, we will therefore encode all relevant information about the possible costs of past queries $Q$ (with costs to be revealed in the future) in terms of the coefficients $(\EnCo{m}{\IFV[i]})_{i=1, \ldots, s, m=0, \ldots, \DB}$ (where the sets are not needed).
Because we will only optimize over $k$-cut STTs, we will have $s \leq k$, so the total number of coefficients for a given set $S$ is (at most) $(\DB+1) k$, and each coefficient \EnCo{m}{\IFS[i],\IFV[i]}, being the sum of at most \NumBV terms each of which is bounded by $\NumBV^{m}$, is at most $\NumBV^{m+1}$.
Thus, for any given feasible set $S$, there are at most $\prod_{j=0}^{\DB} \NumBV^{(j+1) k} = \NumBV^{(\DB+1)(\DB+2) \cdot k/2}$ possible settings of the coefficients.

\subsection{Putting it Together: A Dynamic Program for $k$-cut Search Trees} \label{ss:tree_alg}

As the centerpiece of our algorithm, we now derive the recurrence relation for computing the optimal search strategy costs for subtrees of \BT.
As proved in Lemma~\ref{t:cost_inflation}, at a cost increase of a factor $1+2/k$, we can restrict STTs to $k$-cut STTs.
We will do so here, implying that each feasible set (subtree) $S \subseteq \BV$ under consideration is separated from its complement by at most $k$ edges.

Fix such a set $S$ of feasible vertices, and let $\IFV[1], \ldots, \IFV[s] \in S$ (with $s \leq k$) be the vertices in $S$ with at least one neighbor outside $S$.
Consider an instance characterized by $S$, along with the encoding $(\EnCo{m}{\IFV[i]})_{i=1, \ldots, s, m=0, \ldots, \DB}$.
The optimum solution for this instance must make some next query $q \in S$.
If the adversary reveals that $\TV = q$, then the cost for this additional query is 0, and the total cost of the preceding queries is given by
$\sum_{i=1}^s \sum_{m=0}^p \PolCo{m} \cdot \EnCo{m}{\IFV[i]}$.

Otherwise, let $S' \neq \emptyset$ be the subtree which the adversary reveals to contain \TV, and let $v'$ be the unique neighbor of $q$ in $S'$.
Note that $v'$ is an interface vertex of $S'$.
Let $I = \Set{i \in \SET{1, \ldots, s}}{\IFV[i] \notin S'}$ be the indices\footnote{$I=\emptyset$ is possible.} of interface vertices of $S$ which lie outside of $S'$.
Then, all paths from $q \in \IFS[i]$ for $i \in I$ to any vertex in $S'$ must go through $v'$.
Let $Q' = \SET{q} \cup \bigcup_{i \in I} \IFS[i]$ be the set of all past queried nodes whose path to $S'$ must go through $v'$.
By linearity and \cref{eqn:tree-cost-shift}, for any $m$,
\begin{align*}
  \EnCo{m}{v'}
  & = \EnCo{m}{Q',v'}
  \; = \; \Dist{q}{v'}^m + \sum_{i \in I} \EnCo{m}{\IFS[i]}{v'} 
  \; = \; \Dist{q}{v'}^m + \sum_{i \in I} \sum_{j=0}^m \binom{m}{j} \cdot \Dist{\IFV[i]}{v'}^j \cdot {\EnCo{m-j}{\IFV[i]}}
\end{align*}
can be computed readily from all available information.
Then, the optimal solution cost for the given instance (with $S$) equals the optimum solution cost for the instance characterized by $S'$, $(\EnCo{m}{\IFV[i]})_{i=1, \ldots, s, i \notin I, m=0, \ldots, \DB}$, and $(\EnCo{m}{v'})_{m=0, \ldots, \DB}$.

The adversary's response will maximize the cost; thus, the optimum cost is the maximum cost between the first case $\TV = q$, and any of the possible responses $v'$ which are neighbors of $q$ and inside $S$, thus resulting in a non-empty set $S'$.
In turn, the algorithm tries all candidate queries $q \in S$ such that all possible resulting new sets $S'$ have boundary size at most $k$. This can be trivially checked in linear time using BFS; Lemma 7 of \cite{berendsohn2020stt} shows how to enumerate \emph{all} such candidate queries in linear time.

We thus obtain a recurrence, which can be implemented in a bottom-up fashion by increasing size of the feasible sets $S$.
There are at most $O(\NumBV^k)$ feasible sets of vertices, and for each, there are at most $\NumBV^{(\DB+1)(\DB+2) \cdot k/2}$ parameter settings for the \EnCo{m}{v} to consider.
Thus, the table has size $O(\NumBV^{k + (\DB+1)(\DB+2) \cdot k/2})$.
Each entry is computed as a minimum over at most \NumBV choices of the next query $q$, and the adversary's response as a maximum over at most $k \leq \NumBV$ possible responses.\footnote{The computation of the (constant number of) parameters for the lookup contributes at most poly-logarithmic terms, and takes constant time in a model of constant-time arithmetic.}
Thus, the total running time is $O(\NumBV^{2+(\DB^2/2+3\DB/2+2) \cdot k})$, which is polynomial in \NumBV for constant \DB and $k$.

To obtain a $(1+\epsilon)$-approximation to the true optimal STT cost (without restrictions on the cut size), we need to choose $k \geq \Ceiling{2/\epsilon} \geq 1+2/\epsilon$. Thus, a $(1+\epsilon)$-approximation can be obtained in time $O(\NumBV^{2+(\DB^2/2+3\DB/2+2) \cdot (1+2/\epsilon)})$.
This completes the proof of Theorem~\ref{t:tree}.

\begin{remark}
  While Theorem~\ref{t:tree} extends the result of Theorem~\ref{t:path} to the case of trees, the techniques we used (encoding the total cost for the queries in \SEQ with sums of powers of distances from interface vertices) cannot be used directly to extend the result of Theorem~\ref{t:multivar_path}.
  The reason is that the result for position-dependent costs crucially relies on the ability to identify the vertices of the path with integers $\SET{1, \dots, n}$, which is no longer the case for trees.
  In extending Theorem~\ref{t:path} to Theorem~\ref{t:tree}, we are able to bypass this difficulty because the cost functions depended only on the distance to the target.
\end{remark}

\section{Conclusion} \label{s:conclusion}
We introduced a model of target search with distance-dependent costs, which captures scenarios in which a principal aims to find the correct setting of a parameter, by experimenting with different settings.
The feedback is whether the setting is too high or too low, but the incurred cost (which is not observed) increases in the distance from the true value.
We first showed that when the cost function is symmetric (overshooting vs.~undershooting by the same amount leads to the same cost), standard Binary Search is a 4-approximation.
We then showed that when the cost functions are bounded-degree polynomials, the optimal strategy can be found in polynomial time.
The ideas extend to give a PTAS when the target is a node in a tree, and directional information is given (while the cost still increases in the distance).

In our work, we work in an adversarial model. A natural alternative is to assume that the target is drawn from a known distribution.
Indeed, for the line, this problem was studied in \cite{karp2000congestion}. Because the expected cost of any query can be explicitly evaluated under a known distribution, a straightforward $O(\NumBV^3)$ Dynamic Program gives the optimal solution on the line for arbitrary cost functions.
  Combining this idea with $k$-cut STTs straightforwardly leads to a PTAS for arbitrary monotone cost functions on trees. In fact, because the expected cost of a query can be explicitly evaluated, the proof extends to arbitrary ``path monotone'' cost functions \GenCost{\QUERY}{\TV}, i.e., functions which for any path originating at \QUERY are non-decreasing in the nodes of the path. The details are straightforward.

Our work raises a wealth of questions for future work.
Perhaps most immediately, for what functions $f$ can the problem be solved optimally on the line in polynomial time?
The restriction to polynomials of constant degree is essential for the application of our techniques.
Indeed, in \cref{s:impossibility}, we give evidence that the same ideas cannot be extended to other classes of cost functions. 
One may hope to obtain approximation results for general functions, e.g., by approximating them with polynomials, and then applying the dynamic programming ideas.
However, it appears that the applicability of such ideas is limited, as implicitly, not only the function, but also its derivatives, need to be well approximated.

For trees, the natural question is whether any guarantees can be obtained for arbitrary (or at least more general) monotone cost functions.
For example, can a constant-factor approximation be obtained for arbitrary (or convex) cost functions?
On the opposite end of the spectrum, as we mentioned previously, we do not know if the optimization problem is NP-hard to solve exactly for trees in this setting.
One may hope that the generalization of Binary Search to trees --- always query a node such that all resulting components have size at most $\NumBV/2$ --- might give approximation guarantees. That this is not the case can be seen by considering a path of length $\ell$ with a very large star (more than $\ell$ leaves) at one end. Generalized binary search will first query the center of the star, but if the cost function increases steeply around $\ell/2$, it would be much better to first perform a standard binary search on the path.

A very natural direction to further extend the model is by considering graph structures beyond trees.
Binary Search for general graphs was introduced in \cite{BinarySearch} and studied in a series of recent papers
\cite{dereniowski2022randomgraphs,dereniowski2021graphnoisy,dereniowski2018graphnoisyframework,krishnamurthy2021graph}.
In this model, the target is an unknown node in a given undirected (possibly edge-weighted) graph with $n$ nodes.
In response to a query $\Query{i} \neq \TV$, the algorithm learns a neighbor of \Query{i} lying on a shortest path from \Query{i} to \TV.
As shown in \cite{OnlineLearning}, Binary Search in graphs gives a natural framework for online learning in which the algorithm proposes a structure (such as a binary classifier, ranking, or clustering) and in return receives information about a mistake in its proposal (such as a misclassified data point, an out-of-order pair, or points wrongly clustered together or separated).
\cite{BinarySearch} showed that even in this general model, a target can be found in at most $\log_2 (n)$ queries.
This generalization to the general graph setting finds motivation in a variety of applications.
One such application is recommender systems, specifically ones where the content to be served can be embedded in a graph structure \cite{Wang2021GraphLB}.
Suppose that the information of the user's preferred content is specified by a node \TV within the graph.
The system may then learn about the user's preferred content through, for example, serving the user content \QUERY.
Through the user's interactions with \QUERY, the system learns about the user's preferences, which can be modeled as the direction to \TV from \QUERY.
However, serving the user \QUERY may impose a cost which increases in the distance between \QUERY and \TV.
For example, repeatedly serving content that the user dislikes may cause them to stop using the service altogether with some probability.
Hence, such a scenario fits well into the generalization of binary search that we consider.

With these potential motivations in mind, in combination with the applications given in \cite{OnlineLearning}, it is natural that the cost of a proposal should increase with the distance in $G$ from the proposal to the correct node. This differs from the objective studied in previous work, which is the ``standard'' Binary Search objective of minimizing the worst-case total number of queries, though \cite{BinarySearch} also established hardness results for the case when nodes have known query costs.
This naturally motivates extending a study of our non-uniform cost model to the case of arbitrary undirected graphs. We ask: can any non-trivial approximation guarantees (or hardness) be achieved for interesting classes of cost functions?
Even for the cost function that is identically 1 (i.e., the goal is to minimize the number of queries), minimizing the number of queries is not possible in polynomial time under the ETH.
However, no approximation hardness or non-trivial positive approximation results (beyond the trivial $O(\log n)$) are known; obtaining such results for more general cost functions would be of great interest.

Another generalization of standard binary search that has been extensively studied is the case when the answers received by the algorithm could be erroneous.
Indeed, optimal algorithms for this problem have been studied under different error models, on the line, on trees, and on general graphs. \cite{benor:hassidim:noisy-binary-search,dereniowski2021graphnoisy,dereniowski2018graphnoisyframework,feige:raghavan:peleg:upfal:noisy,karp:kleinberg:noisy}.
Combining our non-uniform cost model with errors in the answers the algorithm receives is a natural direction for future work.

Finally, it would be interesting to explicitly characterize the optimal search costs for various ``natural'' cost functions, in particular, for the linear function $\Cost{x} = x$. As mentioned in \cref{s:strategy-examples}, our experiments suggest that the cost converges to $\approx 0.6245\dots \cdot \NumBV$; while we have been able to derive a set of recursive equations that would yield the exact approximation constant as a solution, we have not been able to obtain a solution at this point.

\bibliographystyle{plainurl}

\newpage

\appendix

\section{Specific Search Strategies: Illustrative Examples} \label{s:strategy-examples}
\label{s:line-examples}

In this section, we illustrate the types of non-trivial strategies discovered by the Dynamic Program on the line.
We specifically study two natural cost functions: linear costs ($\Cost{x} = x$), and the regret from single item pricing, which has.
$\GenCost{\QUERY}{\TV} = \begin{cases} \TV & \text{ if } \TV < \QUERY\\ \TV - \QUERY & \text{ if } \QUERY \leq \TV \end{cases}$.
The examples illustrate the differences between standard Binary Search and the intricacies of adapting to specific cost functions.

\cref{f:opt_path_linear_cost} shows the optimal search strategy $\Cost{x} = x$ and $\NumBV = 10$; it has a cost of 6.
By comparison, Binary Search incurs a cost of 8.
The optimal strategy does better than Binary Search by querying vertices in the parts of the path that are ``less explored;'' that is, biasing queries more strongly towards one side of the path the more queries have been made on the opposite side.
This is to guard against multiple queries each incurring large costs.
Asymptotically, based on experiments\footnote{Although the empirical value of $0.6245\dots$ can be found by solving instances of increasing size, there is also a recurrence involving a sequence of bounds on certain rational functions tied to the query structure made by the optimal search strategy.
This sequence can be used to derive the value $0.6245\dots$ more efficiently.}, we observe that the cost of the optimal search strategy on a path of length \NumBV tends to $\approx 0.6245 \NumBV$, contrasting with the cost of Binary Search which tends to $\NumBV$.

\begin{figure}[htb]
    \centering
    \begin{tikzpicture}[every node/.style={circle,draw},level 1/.style={sibling distance=25mm},level 2/.style={sibling distance=10mm}]
        \node {5}
        child {node {2}
            child {node {1}}
            child {node {3}
                child {node {4}}
            }
        } 
        child {node {9}
            child {node {7}
                child {node {6}}
                child {node {8}}
            }
            child {node {10}}
        };
    \end{tikzpicture}
    \caption{The optimal search strategy on a path of 10 vertices with $\Cost{x} = x$. \label{f:opt_path_linear_cost}}
\end{figure}
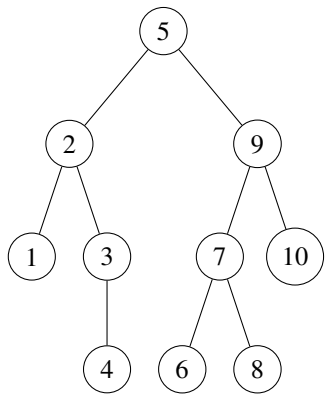

\cref{fig:opt_path_asymmetric} shows the optimal search strategy for the case where $\GenCost{\QUERY}{\TV} = \TV$ if $\TV<\QUERY$ and $\GenCost{\QUERY}{\TV} = \TV - \QUERY$ if $\QUERY \leq \TV$. The optimal strategy has a cost of 17, compared to Binary Search which achieves a cost of 23. 
Observe that this strategy performs essentially linear search in smaller intervals; overall, its structure resembles the strategy given in \cite{kleinberg2003oppa}, in that it performs ``linear-style'' search to more quickly identify the right ``range,'' then again use ``linear-style'' search to narrow within that range.

\begin{figure}[htb]
    \centering
    \begin{tikzpicture}[every node/.style={circle,draw},level 1/.style={sibling distance=35mm},level 2/.style={sibling distance=25mm}]
        \node{12}
        child {node {8}
                child {node {6}
                    child {node {4}
                        child {node {3} 
                            child{node{2} 
                                child{node {1}
                            }
                        }
                    }
                        child {node {5}}
                    }
                    child {node {7}}
                }
                child {node {9} 
                    child {node{10} 
                        child {node {11}}
                    }
                }
            }
        child {node {15}
            child {node {13}
                child {node {14}}
            }
            child {node {16}
                child {node {17}
                    child {node {18}
                        child {node {19}}
                    }
                }
            }
        };
    \end{tikzpicture}
    \caption{The optimal search strategy on the path on 20 vertices with $\GenCost{\QUERY}{\TV} = \TV$ if $\TV < \QUERY$ and $\GenCost{\QUERY}{\TV} = \TV-\QUERY$ if $\TV \geq \QUERY$. \label{fig:opt_path_asymmetric}}
\end{figure}
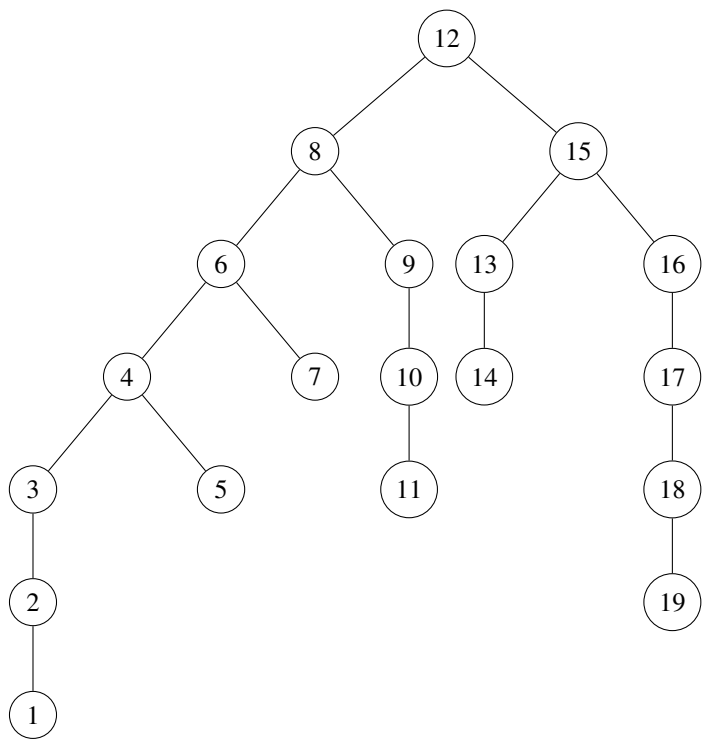

\section{Limits of the Dynamic Programming Approach} \label{s:impossibility}
Here, we provide some strong heuristical arguments for the fact that our dynamic programming approach cannot be extended beyond polynomials \COST of bounded degree.
Recall that the key idea to enable a dynamic programming approach was to choose a succinct encoding of the function \Cost[Q]{\TV} which is the cost that will have retroactively been incurred by the queries in $Q$ when the target is revealed to be \TV.

In our encoding of \COST[Q], we are using a basis of monomials and keeping the coefficients of all monomials for the past queries.
More generally, one could consider an encoding via a basis of functions $B = \SET{g_1, g_2, \dots, g_m}$, such that it is possible to write the cost function as

\begin{align*}
  \Cost[Q]{\TV} & = \sum_{q \in Q} \Cost{q - \TV}
                  \; = \; \sum_{i=1}^m c_i g_i(\TV) \text{ for all } \TV.
\end{align*}
Then, the coefficients $c_i$ would comprise the encoding of $Q$.
In our DP approaches, we specifically used the basis $B = \{1, x, x^2, \dots, x^{\DB}\}$ of monomials; the following example with $\Cost{x} = x^3$ demonstrates the desired property of allowing an encoding:
\begin{align*}
    \Cost{x - 2} & = (x-2)^3 \; = \; x^3 -6x^2 +12x -8 \in \text{Span}(\SET{1, x, x^2, \dots, x^{\DB}}).
\end{align*}
In particular, it is essential that
\begin{enumerate}
    \item \label{item:lin-comb-property}the function $\Cost{q-\TV}$ (of \TV) can be written as a $\mathbb{Z}$-linear combination of $B$ for arbitrary $q \in \SET{1, \dots, \NumBV}$;
    \item \label{item:monotonicity-property} $h$ be monotone, and thus the $g_i$ be monotone, too; and
    \item \label{item:polynomial-time-property} the extension of our methods to the basis of functions $B$ run in polynomial time.
\end{enumerate}
We prove that for any basis $B = \{g_1, \dots, g_m\}$ satisfying these properties and any function $h$ that is a linear combination of elements of $B$, $h$ must be a polynomial. Our proofs involve techniques from discrete calculus, for which we refer the reader to \cite{jordan1965calculus}.

For $i \in \mathbb{Z}$, let $E_i$ denote the \emph{shift operator} which acts on functions as $(E_i f)(x) = f(x+i)$.
Fix a finite set of functions $B = \SET{g_1, \dots, g_m}$, and denote the vector space (over $\mathbb{R}$) of functions spanned by linear combinations of elements of $B$ by $\text{Span}(B)$. Observe that Property~(\ref{item:lin-comb-property}) implies that $\text{Span}(B)$ must be closed under shift operators.
In particular, it must be closed under $E_1$.
With this in mind, fix a $g_j \in B$. Since $\text{Span}(B)$ is closed under $E_1$, the discrete derivative $\Delta g_j := g_j(x+1) - g_j(x) = E_1 g_j - g_j$ is in $\text{Span}(B)$.
Thus, since $\Delta$ is a linear operator, closure of $\text{Span}(B)$ under $E_1$ in turn implies closure under the \emph{difference operator} $\Delta$.

Now, consider the set $D := \{g_j, \Delta g_j, \Delta^2 g_j, \dots\}$. Since $\text{Span}(D)$ is a subspace of $\text{Span}(B)$, and $\text{Span}(B)$ is at most $m$-dimensional, the dimension of $\text{Span}(D)$ is at most $m$.
Thus,
the set $\{g_j, \Delta g_j, \Delta^2 g_j, \dots, \Delta^m g_j\}$ is linearly dependent, meaning that there exist constants $c_0, \dots, c_m$ (not all zero) such that $\sum_{i=0}^m c_i \Delta^i g_j = 0$.

We express the difference operators in terms of shift operators with the identity 
$$(\Delta^k g)(x) = \left[(-1)^k \sum_{\ell = 1}^k \binom{k}{\ell} \cdot (-E_1)^{k-\ell}g\right](x) = \left[\sum_{\ell = 1}^k \binom{k}{\ell} \cdot (-1)^{\ell} E_{k-\ell} g\right](x),$$ where we used that $E_1^j = E_j$.
Notice that the identity expresses each $\Delta^i$ as a polynomial of shift operators with constant coefficients, and that $E_1^j = E_j$.
Substituting this expression into the equation $\sum_{i=0}^m c_i \Delta^i g_j = 0$ results in a homogeneous linear difference equation in $g_j$ with constant coefficients, the solutions of which are expressible in the form
\begin{align*}
    g_j(x) & = \sum_{r=1}^M p_r(x) \cdot e^{\alpha_r x} \cdot (\cos(\beta_r x) + i \sin (\beta_r x))
\end{align*}
for some $M > 0$, $\alpha_r, \beta_r \in \mathbb{R}$,  and polynomial $p_r$ (refer to p.~557 of \cite{jordan1965calculus}).

Since Property~\ref{item:monotonicity-property} requires $g_j$ to be monotone, we must have $\beta_i = 0$ for all $i$.
Now suppose that there is some $g = g_j$ which has a corresponding $\alpha_r \neq 0$, and write $a = e^{\alpha_r}$.
We will show that even the set $\Set{h(1)}{h \in \text{Span}(B), x \in \SET{1, \ldots, \NumBV}}$ (i.e., evaluating all possible functions at $x=1$) has size exponential in \NumBV, meaning that the values of possible functions will not admit a concise (polynomial) encoding.
We first consider the easier case $a > 1$. Write $z = \Ceiling{\log_a 2}$. Then, $E_z g(x) = (x+z)^p \cdot a^{x+z} \geq x^p \cdot 2 \cdot a^x = 2g(x)$ for all $x$.
Let $\ell = \Floor{(\NumBV-1)/z}$, and consider coefficient vectors $\bm{c} \in \SET{0,1}^{\ell}$.
Now consider the integer linear combinations $h_{\bm{c}} = \sum_{j=1}^{\ell} c_j \cdot E_{j\cdot z} g$ for all such $\bm{c}$.
They are all in the integer span of $B$ by our preceding arguments.
Because $E_{(j+1) z}(x) \geq 2 E_{j z} (x)$ for all $x$, we get that whenever $\bm{c} \neq \bm{c'}$, the corresopnding values $h_{\bm{c}}(1) \neq h_{\bm{c'}}(1)$.
Because there are $2^{\ell} = 2^{\Omega(\NumBV)}$ such values, we obtain exponentially many values in the image of the span.

The essence of the argument is the same for $a < 1$, with only a minor difference in the detail.
Here, we will ensure that $E_z g(x) \leq g(x)/2$ for all $x$.
Solving that $(x+z)^p a^{x+z} \leq x^p a^x/2$, or equivalently $\left( \frac{x+z}{x} \right)^p \leq a^{-z}/2$ we see that it is enough to choose $z$ to satisfy $z^p \leq a^{-z}/2$. Then, we see that $z = \Omega(\log \NumBV)$ is sufficient.
The rest of the argument is essentially the same, showing that for each integer linear combination $\bm{c}$ of the form described above, the value $h_{\bm{c}}(1)$ is different. Thus, we get $2^{\ell} = 2^{\Omega(\NumBV/\log \NumBV)}$ values in the image of the span.

Thus, the assumption that there is a polynomial sized encoding rules out having any function $g_j$ with any exponential factor, meaning that all $g_j$ must be polynomials.

\end{document}